\newcommand*\diff{\mathop{}\!\mathrm{d}}
\newcommand {\ie} {\emph{i.e.,}}
\newcommand {\cf} {\emph{cf.}}
\newcommand {\eq}  [1] {\eqref{eq:#1}}
\newcommand {\Eq}  [1] {Eq.~\eq{#1}}
\newcommand {\Eqs} [1] {Eqs.~\eq{#1}}
\newcommand {\EQ}  [1] {Equation~\eq{#1}}
\newcommand{\fig}[1]{\ref{fig:#1}}
\newcommand{\Fig}[1]{Fig.~\fig{#1}}
\newcommand{\FIG}[1]{Figure~\fig{#1}}
\newtheorem{thm}{Theorem}
\newtheorem{lem}[thm]{Lemma}
\newtheorem{oss}[thm]{Remark}
\begin{document}

\preprint{AIP/123-QED}

\title[\emph{Tayyab}]{Generalized autocorrelation function in the family of deterministic and stochastic anomalous diffusion processes}
\author{Muhammad Tayyab} 
\email{muhammad.tayyab@fecid.paf-iast.edu.pk}
\affiliation{School of Computing Sciences, Pak-Austria Fachhochschule Institute of Applied Sciences and Technology, Mang, Haripur, Pakistan 
\\ 
Faculty of Engineering Sciences, Ghulam Ishaq Khan Institute of Engineering Sciences and Technology, Topi, Swabi, Pakistan
}

\date{\today}

\begin{abstract}
We investigate the observables of the one-dimensional model for anomalous transport in semiconductor devices where diffusion arises from scattering at dislocations at fixed random positions, known as L\'evy-Lorentz gas. To gain insight into the microscopic properties of such a stochastically complex system, deterministic dynamics known as the Slicer Map and Fly-and-Die dynamics are used.
We analytically derive the generalized position auto-correlation function of these
dynamics and study the special case, the $3$-point position correlation function. For this, we derive single parameter-dependent scaling and compare it with the numerically estimated $3$-point position auto-correlation of the L\'evy-Lorentz gas, for which the analytical expression is still an open question. Here we obtained a remarkable agreement between them, irrespective of any functional relationship with time. Moreover, we demonstrate that the position moments and the position auto-correlations of these systems scale in the same fashion, provided the times are large enough and far enough apart. Other observables, such as velocity moments and correlations, are reported to distinguish the systems.

\end{abstract}
\keywords{Anomalous transport, Slicer Map, Fly-and-Die, L\'evy-Lorentz gas, auto-correlation function, deterministic maps}   

\maketitle


%
%
%

\section{\label{sec:intro}Introduction}
Anomalous transport has been a very active field of research for several decades, but in the past few years it has received enormous attention due to its potential application in numerous fields of science that describe many physical phenomena. For instance, this has been observed in the charge carrier motion in semiconductors~\cite{SM75}, in the polygonal billiards~\cite{JBR08,JeRo06}, in the ion motion within electrolytic cells~\cite{LZRVCMSE17}, in the single molecules inside living cells~\cite{BGM12}, in the ultra-cold atoms~\cite{SBAD12}, in the disordered media~\cite{HA02}, in artificially crowded systems and protein-crowded lipid bi-layer membranes~\cite{JLOM13,JMJM12,SW09}, in the experimental evidence on the mobility of particles in living cancer cells~\cite{GW10} and many others.

The one quantity of interest is to study \emph{transport exponent} $\gamma$ for which the generalized diffusion coefficient 
\begin{equation}
D_\gamma = \lim\limits_{n\rightarrow\infty}\frac{\langle \left(x_n - x_0\right )^2\rangle}{n^\gamma} \in (0,\infty)\,,
\end{equation}
is positive and finite. The numerator ${\langle \left(x_n - x_0\right )^2\rangle}$ represents the \emph{mean square displacement} (MSD) for the position of particle $x_n$ at time $n$. The angular brackets $\langle \cdot \rangle$ correspond to the ensemble average over all particles. The exponent $\gamma$ takes the values $0 \leq\gamma \leq 2$; the transport is called sub-diffusive for $0\leq \gamma<1$, which leads to rapid limit decay; standard diffusion for $\gamma=1$ followed by Fick's law, which has the basic characteristic that the MSD grows linearly in time; super-diffusive when $1< \gamma < 2$, hence the limit diverges, and $\gamma=2$ yields ballistic diffusion. Collectively, except for $\gamma = 1$, this represents a wide spectrum, commonly known as anomalous transport~\cite{FSBZ15,FVGB17,KRS08,WCMS22,MK00,MK04,Kla06}.
A contemporary summary of a rich variety of anomalous diffusion processes is provided in~\cite{MJCB14}.
Whereas standard diffusion has been widely investigated in the literature, for instance, see \cite{Gas05,D99,Kla06,Kla96} and references therein.
Dynamical systems that exhibit all possible diffusion regimes in the field of anomalous transport are rare in the literature, although in the realm of deterministic dynamics, several authors have investigated anomalous diffusion~\cite{JBS03,JeRo06,JBR08}.
Moreover, in the era of deterministic dynamics, the transport phenomenon is well understood in chaotic systems, which commonly corresponds to standard diffusion. This happened due to the fast rate at which correlation decays. In non-chaotic systems, transport is still underlying, which may often lead to anomalous transport. This is since the rate at which correlation decays is much slower~\cite{Salari,JeRo06,Kla06,KRS08,Zas02,MK04}.
In presence of stochastic elements, the scenario is often
closer to that of chaotic dynamics~\cite{DC00,CFVN02}, but numerous
questions remain open~\cite{GNZ85,BF99,Kla06,Sokolov,MK00,Zas02,Z07,DKU03,LWWZ05,BFK00,BCV10}.
In particular, the asymptotic behaviour of correlation functions is not
understood in general, although it is relevant \emph{e.g.,} to distinguish
transport processes that are effectively different
but have same moments~\cite{Sokolov}. Numerous investigations
have been devoted to this subject, see \emph{e.g.,}~\cite{LWMC23,WCLM20,Z08,BF07,BS07,ZDK15}.

The Slicer Map (SM) was introduced by Salari et al.~\cite{Salari} to study mass transport. The original point of interest was to construct an exactly solvable model (perfect determinism) that would reproduce the transport regimes found numerically in polygonal billiards~\cite{JeRo06}. The SM diffuses in one scaling regime and exhibits sub-, super-, and normal diffusion under a single parameter variation. The position statistics of the SM, including many systems that exhibit strong anomalous transport, are dominated by ballistic trajectories~\cite{VRTGM21,CMGV99,VBB19,VBB20}.
%
It has been proven by Salari et al.~\cite{Salari} that the SM to regenerate the asymptotic scaling of the position moments of a much more complex system, the L\'evy-Lorentz gas (LLG)~\cite{BF99,BCV10}. 
The LLG is a one-dimensional random walk in a random environment in which the scatterers are randomly distributed on a line according to a L\'evy-stable probability distribution. Burioni et al.~\cite{BCV10} used simplifying assumptions to determine the mean-square displacement of the traveled distance and numerically validated their findings. More recently, Bianchi et al.~\cite{BCLL16} presented a rigorous mathematical study to prove a central limit theorem, and M. Zamparo~\cite{Z23} investigated large fluctuations and transport properties of the LLG.
It was further proven by Giberti et al.~\cite{GRTV17} when the single parameters $\alpha$ and $\xi$ of the SM and the LLG, respectively, are properly tuned, this choice of parameters leads, within the transport regimes of the LLG, to the equality of the asymptotic scalings of the $2$-point position auto-correlation functions. This makes two very different systems indistinguishable regardless of their microscopic dynamics, as far as the statistics of positions are concerned. Indeed, such an agreement does not infer a full equivalence of the dynamics. For example, the trajectories of the SM move ballistically in an initial transit and then turn periodic in a period-two cycle, then remain oscillating back and forth between their neighbouring cells, while in the LLG all trajectories are stochastic. This fact is further addressed in sec.~\ref{subsec:SM} and \ref{sec:concl}.

The deterministic and time-continuous prototypical model, Fly-and-Die (FND) dynamics, was introduced to mimic the universal features of displacement statistics~\cite{VRTGM21}. The FND dynamics exhibit a wide spectrum of diffusion, from sub-, normal, and super, upon varying a single parameter. In the FND, strong anomalous diffusion (super diffusion) emerges due to ballistic trajectories, \ie the ballistic trajectories that did not undergo transitions up to any finite time (see for instance \cite{GSSPCM18}). It is further motivated by the fact that subdominant terms in the FND and the SM contribute like ballistic flights to the asymptotic behaviour, \ie~they contribute the maximum allowed for a system to belong to the universal behaviour. This is further proven by the fact that, analytically, all the position moments and the $2$-point position correlations of the FND asymptotically scale as those of the SM despite having different microscopic structures. Upon tuning the diffusion parameter of the SM and the FND with the LLG in accordance with this agreement, all the moments coincide analytically and with the remarkable numerical agreement. At the same time, the $2$-point position correlation exhibits the same power law behavior as those numerically estimated position correlations in the LLG \cite{GRTV17,VRTGM21}.


In this paper, we intend to explore the equivalence of the higher-order position auto-correlation function of the SM, the FND, and the LLG and see up to which order of correlation the SM and the FND are indistinguishable from the LLG. For this, first, we derive the generalized position auto-correlation function of the SM and the FND, and then, for the particular case, \ie the $3$-point position auto-correlation function, we compute a single scaling form that depends only on one parameter: $h_2/n_2$ (\cf~section~\ref{subsec:3pcorr} and \Eq{U_3pt_corr_FnD}), which is simply the ratio of times.
We compare one parameter-dependent analytical scaling form of correlation with the numerically estimated position auto-correlation function of the much more realistic model, the LLG. We find remarkable agreement between the correlation scaling of the SM, the FND, and all numerically estimated $3$-point position correlation of the LLG. Regardless of any functional relationship between the time, all data sit on top of each other and have a nice agreement with the theoretical prediction (\cf~\Eqs{U_3pt_corr} and \eq{U_3pt_corr_FnD}). Moreover, the velocity moments and correlation function are also reported to observe the dissimilarities in these systems.
On the contrary, we also argue about the statistics of the position moments and the correlations that scale in the same way, see Fig. \ref{fig:expnt-data}. This is due to the fact that, in the correlation function, separation between different times becomes irrelevant as compared to the mean.

This paper is organized as follows: Section~\ref{subsec:SM} formally summarises the SM and illustrates its properties. Section~\ref{sec:SM-npoint-corr} provides the $m$-point position auto-correlation function expression. Section~\ref{subsec:3pcorr} demonstrates a scaling formula for the $3$-point position auto-correlation function. In Sections~\ref{sec:Moments-v-SM} and \ref{sec:Corr-v-SM}, we formally introduce velocity moments and correlation functions of the SM. Section~\ref{sec:FnD-model} does the same for the FND dynamics. Section~\ref{sec:LLg} devotes itself to the LLG, which characterises the properties of the system and also reports numerical results on the applicability of the scaling formula for the $3$-point position auto-correlations. Section~\ref{sec:concl} summarises our conclusions.

\section{Deterministic dynamics\label{sec:det}}
\subsection{The slicer map\label{subsec:SM}}
The SM is one dimensional, deterministic and exactly solvable dynamics \cite{GRTV17,Salari}. Its time evolution is given by the map
\begin{align*}
S_\alpha : 
 \left[ 0, 1 \right] \times \mathbb{Z} \to 
\left[ 0, 1 \right] \times \mathbb{Z}
 \end{align*}
\begin{subequations}\label{eq:Sys-SM}
defined by
\begin{align}\label{eq:SM-EOM}
&x_{n+1} 
 =
 S_\alpha(x_n) \nonumber   \\
  &= \left\{
    \begin{array}{rl}
      (x_n,m-1) & \mbox{ if \quad} 0\leq x_n \le  \ell_{m} \mbox{ \;or\; } \frac{1}{2} < x_n \leq 1-\ell_{m},\\[3mm]
      (x_n,m+1) & \mbox{ if \quad} \ell_{m} < x_n \leq \frac{1}{2} \mbox{ \;or\; } 1-\ell_{m} < x_n \leq 1 \, ,
    \end{array}
                \right.
\end{align}
here 
\( x_{n} = \{ x + n \} \), with \( x \) is the fractional part (\ie~\( 0 \leq x < 1 \)) and \( n \in \mathbb{N}_0 \) a non-negative integer. In each term \( x_{n} \), \( n \) is added to \( x \), and the use of the fractional part function ensures that \( x_{n} \) remains within the interval \([0,1]\). The initial ensemble $x_0$ is chosen uniformly in the interval $[0,1]$. 

The family of slicers 
\begin{equation}
\ell_{m} =\frac{1}{(|m|+2^{1/\alpha})^\alpha}, \quad\text{ with } \alpha \in \mathbb{R}^+ \,,
\end{equation}
\end{subequations}
determines the position of the slicer and chops the slices in their neighboring cells.

For $1/2 < x_n < 1$ each iteration of the map increases
the values of $m$ by one, until $x_n > \ell_{m}$. Subsequently,
the trajectory enters a stable period-two cycle, oscillating back and forth between the two neighboring sites $m$ and $m - 1$. Similarly, for $0 < x_n < 1/2$ each iteration
of the map decreases the values of $m$ by one until $x_n < -\ell_{m}$, and then the trajectory enters a stable period two-cycle. 
The distance between two trajectories does
not change in time, as long as they are mapped by the same branch of the map which, for each $m\in\mathbb{Z}$, are defined by the \emph{``slicer''} $\ell_{m}$. 
The distance between two points
$x_n$ and $x_{n+1}$ jumps discontinuously when they reach a cell $m$ where $\ell_{m} \in [x_n, x_{n+1}]$. Thus, the dynamics is reminiscent
of polygonal billiard dynamics \cite{JBR08,JeRo06}, where initial conditions are only separated when they are reflected by different sides of the
polygon. The corners act as slicers of the bundle of initial
conditions. The analogy between the two systems
also includes the fact the SM has a vanishing Lyapunov exponent and it preserves the phase space volume. Like the SM exhibits sub-, super-, and normal diffusion upon varying the parameter $\alpha$ that describes the position of the slicers (see~\cite{Salari,GRTV17} and Ref. therein). Therefore such deterministic dynamics are rare in the literature of transport processes that shows a wide spectrum of diffusion.
\begin{figure}
\hspace{-6.5mm}\includegraphics[width=3.62in]{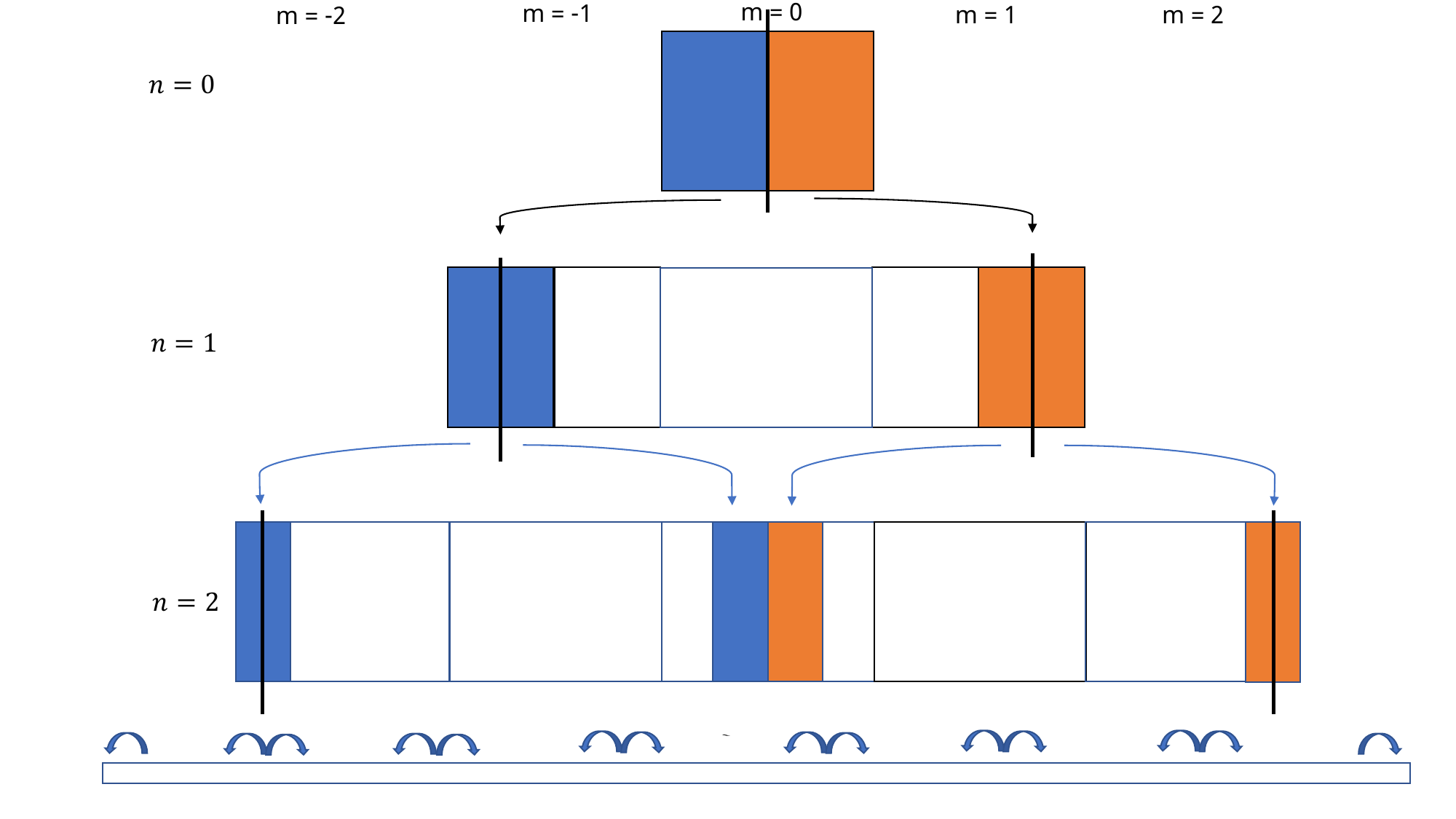}
\caption{\label{fig:SM-Map} Demonstration of space-time plot for the Slicer Map defined by $S_\alpha$ in \Eq{Sys-SM}, where $n$ represents time and $m$ space; shown is the diffusive spreading of points that at $n=0$ are uniformly distributed initial condition on the unit interval centered around $m=0$. Iteration of the map $S_\alpha$ is shown up to time $n=2$, it represents that points in the centered cell $m=0$ start moving in their neighbouring cells as $n$ grows. The lower horizontal strip with back and forth signs on the central and forward-backward signs on the edges denotes sub-travelling and travelling points respectively as $n$ grows.}
\end{figure}

\subsubsection{$p^{\text{th}}$ position moments}
Salari et al.~\cite{Salari} introduced the SM and calculated all moments of the displacement as a function of the number $n$ of iterations of the map. In the following, we review
those calculations differently, but for the sake of simplest representation, we shift the origin of the positions by $1/2$, so that the right half of the unit interval
coincides with $[0, 1/2]$, rather than $[1/2, 1]$. This does not
affect the asymptotic results that are obtained from an ensemble of initial conditions with $m = 0$ and $x_0$ uniformly
distributed in the right half of the unit interval. For $n\gg 2^{1/\alpha}$
the $p^\text{th}$ position moments amounts in following Lemma~\ref{lem:pmoments}.
\begin{lem}\label{lem:pmoments}
Given $\alpha>0$, the $p^{th}$ position moments of the Slicer dynamics for uniformly distributed initial condition, asymptotically scales as
\begin{eqnarray}
  \left\langle (x_n - x_0)^p \right\rangle  
\sim \left\{ \begin{array}{ll}\label{eq:SM-moments}
const.\,,       & \text{ for }  p < \alpha \, ,
                   \\[2mm]
                   \displaystyle 2 \; \ln\frac{n^\alpha}{2}\, ,         & \text{ for }  p = \alpha \, ,
                   \\[2mm]
                   \displaystyle \frac{2\, p}{p-\alpha} \, n^{p-\alpha}\, ,  & \text{ for }  p > \alpha > 0 \, .
    \end{array}\right.
\end{eqnarray}
\end{lem}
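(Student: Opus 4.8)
The plan is to reduce the computation of the moments to the statistics of a single deterministic ``travel distance''. First I would exploit the fact that the fractional coordinate is frozen by the map and only the cell index $m$ evolves: a trajectory issued from $x_0$ (with $m_0=0$) keeps selecting the advancing branch, so that $m_n=n$, for exactly as long as $\ell_{m}$ exceeds the threshold fixed by $x_0$. Since $\ell_m$ is strictly decreasing in $|m|$, this condition fails for the first time at a well-defined cell $N=N(x_0)$, characterised by $\ell_{N}\le 1-x_0<\ell_{N-1}$ (in the unshifted coordinates); thereafter the orbit is caught in the period-two cycle between cells $N-1$ and $N$. Hence the displacement obeys
\begin{equation*}
  m_n = \begin{cases} n, & n\le N(x_0),\\ N(x_0)\ \text{or}\ N(x_0)-1, & n> N(x_0),\end{cases}
\end{equation*}
so that $m_n=\min(n,N(x_0))$ up to an additive $\pm1$ that is asymptotically irrelevant, and $(x_n-x_0)^p\sim m_n^p$ because the frozen fractional part stays bounded.

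Next I would push the uniform law of $x_0$ forward to the law of $N$. Inverting $\ell_m=(|m|+2^{1/\alpha})^{-\alpha}$ gives, for the uniform ensemble on the relevant half-interval,
\begin{equation*}
  \mathbb{P}\bigl(N(x_0)\ge m\bigr)=2\,\ell_{m-1}=2\,(m-1+2^{1/\alpha})^{-\alpha}\sim 2\,m^{-\alpha},
\end{equation*}
the prefactor $2$ coming from the normalisation of the density on the half cell. I would then split the moment into the still-ballistic fraction and the already-trapped fraction,
\begin{equation*}
  \langle m_n^p\rangle = n^p\,\mathbb{P}(N>n)\;+\;\sum_{m=1}^{n} m^p\,\mathbb{P}(N=m),
\end{equation*}
which is the decomposition that makes the anomalous scaling transparent.

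Evaluating each piece is then a routine integral/Euler--Maclaurin comparison. Using $\mathbb{P}(N=m)=2(\ell_{m-1}-\ell_m)\sim 2\alpha\,m^{-\alpha-1}$, the trapped summand behaves like $2\alpha\,m^{p-\alpha-1}$, while the ballistic term is $n^p\,\mathbb{P}(N>n)=2n^p\ell_n\sim 2\,n^{p-\alpha}$. For $p>\alpha$ the sum grows like $\tfrac{2\alpha}{p-\alpha}n^{p-\alpha}$ and, crucially, must be added to the ballistic contribution $2\,n^{p-\alpha}$; the two combine to $\bigl(2+\tfrac{2\alpha}{p-\alpha}\bigr)n^{p-\alpha}=\tfrac{2p}{p-\alpha}n^{p-\alpha}$, reproducing exactly the super-diffusive prefactor and confirming that the ballistic orbits carry a finite share of the moment. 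For $p<\alpha$ the exponent $p-\alpha-1<-1$ makes the sum convergent and the ballistic term vanish, leaving a finite constant. For $p=\alpha$ the summand reduces to $2\alpha/m$, producing the logarithm $2\alpha\ln n=2\ln n^\alpha$.

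The part I expect to be most delicate is pinning down the exact additive constant in the marginal case $p=\alpha$, where the claim is the sharp form $2\ln(n^\alpha/2)$ rather than merely $2\ln n^\alpha+O(1)$. This forces one to keep the exact slicer $\ell_m=(|m|+2^{1/\alpha})^{-\alpha}$ (not just its $m^{-\alpha}$ tail), to track the Euler--Mascheroni constant arising from $\sum_{m\le n}1/m$, and to add the $O(1)$ ballistic contribution $2\,n^{p-\alpha}\big|_{p=\alpha}=2$; the shift $2^{1/\alpha}$ is precisely what turns the bare $\ln n^\alpha$ into $\ln(n^\alpha/2)$, and this is where the hypothesis $n\gg 2^{1/\alpha}$ enters. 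Alongside this, I would justify the two approximations made above --- that the $\pm1$ oscillation of the trapped orbits and the replacement of $\ell_{m-1}$ by its leading power do not alter the leading asymptotics --- which for $p\ge\alpha$ holds because those corrections are of strictly lower order than the retained terms.
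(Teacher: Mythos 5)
Your proposal is correct and follows essentially the same route as the paper's appendix proof: the paper splits the ensemble average $2\int_0^{1/2}\diff x$ at $x=\ell_n$ into a still-flying part contributing $2\,n^p\ell_n \sim 2\,n^{p-\alpha}$ and a localized part $2\int_{\ell_n}^{1/2}\diff x\,(x^{-1/\alpha}-2^{1/\alpha})^p \sim \frac{2\alpha}{p-\alpha}n^{p-\alpha}$, which is exactly your ballistic term plus trapped sum (your pushforward law of $N$ is the discretization of that integral via $x=\ell_m$). In both arguments the decisive observation is identical: the two contributions are of the same order and add up to $\frac{2p}{p-\alpha}n^{p-\alpha}$, with the marginal case $p=\alpha$ yielding the logarithm, so the only genuine refinement you flag --- the sharp constant in $2\ln(n^{\alpha}/2)$ --- is likewise left at $\mathcal{O}(1)$ precision by the paper itself.
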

\begin{proof}
See appendix~\ref{app:pth-momnt}.
\end{proof}

\noindent For $p=2$, the \emph{MSD} $\left\langle (x_n - x_0)^2 \right\rangle \sim n^\gamma$ where $\gamma = {2-\alpha}$ with $0<\gamma<2$, captures all scenarios of anomalous diffusion. The SM exhibits super-diffusion for $\gamma > 1$; for $\gamma = 1$, the power law grows linearly in time, \ie normal diffusive; and for $\gamma < 1$, it is sub-diffusive. Since there is no drift in the SM, all odd moments vanish. One confined the motion of particles in one direction, we can also identify all odd position moments. 


The $2$-point position auto-correlation function of the SM asymptotically scales as those of the numerically estimated position correlations of the LLG~\cite{GRTV17}.
Since our fundamental objective is to observe the equivalence of the position auto-correlation functions of different dynamics, only partial equivalence at the level of all position moments and the correlation function of order $2$ do not suffice to determine the indistinguishability of the dynamics. This may leave many unanswered questions; \emph{e.g.,} as far as statistics of positions are concerned, one does not know to what extent the SM, the FND, and the LLG are indistinguishable.
To address these questions, in the following, we explicitly derive the generalized 
position auto-correlation function of the SM and the FND to see how far equivalence holds to the numerically estimated correlations of the LLG.
\subsubsection{Generalized position auto-correlation function}\label{sec:SM-npoint-corr}
The generalized (or $m$-point) position auto-correlation function  of the SM for time
$n_m \geq n_{m-1}\geq\cdots\geq n_2\geq n_1,$
is defined as
\begin{align}
&\phi_\alpha(n_1,n_2,\cdots,n_m)=\langle \, (x_{n_1}-x_0)\,\,\cdots\, (x_{n_{m-1}}-x_0)\,(x_{n_m}-x_0)\, \rangle
\nonumber\\
&= \langle \Delta x_{n_1}\, 
\cdots\, \Delta x_{n_{m-1}}\,\Delta x_{n_m} \rangle\,  
 = \int\limits_0^{1/2} \diff x\; \Delta x_{n_1}\, 
 \cdots\, \Delta x_{n_{m-1}}\,\Delta x_{n_m}\,.
\end{align}
According to the flight of trajectories, the integration interval $\mathcal{F}=(0,\, 1/2]$ is partitioned into $m+1$ parts, \ie~$\mathcal{F} =
 {L^{>\,n_m}} 
\cup {L^{>\,n_{m-1}}} \: \cup\: \,\cdots \, \: \cup\: {L^{>\,n_1}}\:\cup \:{L^{\geq\,1/2}}$, defined as
\begin{align}
&\phi_\alpha(n_1,n_2,\cdots, n_m) \nonumber\\
  &= \int\limits_{L^{>\,n_m}} \diff x\; \Delta x_{n_1}\, \Delta x_{n_2} \,\cdots\, \Delta x_{n_m}\,
+ 
\int\limits_{L^{>\,n_{m-1}}}  \diff x \;\Delta x_{n_1}\, \Delta x_{n_2} \,\cdots\, \Delta x_{n_m}+ 
 \\
&\cdots 
+
\int\limits_{L^{>\,n_1}}  \diff x\; \Delta x_{n_1}\, \Delta x_{n_2} \,\cdots\, \Delta x_{n_m}\,
+
\int\limits_{L^{\geq\,1/2}}  \diff x\; \Delta x_{n_1}\, \Delta x_{n_2} \,\cdots\, \Delta x_{n_m}\,\nonumber\,,  
\end{align}
with $n_m \geq n_{m-1}\geq\cdots\geq n_2\geq n_1\,.$

The integration limits are separated accordingly to their trajectory flying time
\begin{description}
\item[${L^{>\,n_m}} = \{0<x<\ell_{n_m}\}$] 
						All trajectories are flying at all times, such that $\Delta x_{n_k}=n_k\,,\,$ with $k=1,\,2,\,3,\,\cdots,\, m\,.$
\item[${L^{>\,n_{m-1}}} = \{\ell_{n_m}<x<\ell_{n_{m-1}}\}$] 
						The trajectory is still flying at time $n_{m-1}$, but it has localized (turned periodic) by time $n_m$, consequently 
						$\Delta x_{n_m}=\left(x^{-1/\alpha} - 2^{1/\alpha}\right)\,$ 
						and 
						$\Delta x_{n_k} = \prod\limits_{k=1}^{m-1} n_k\,$.
			\\	\vdots 
\item[${L^{>\,n_1}} = \{\ell_{n_2}<x<\ell_{n_1}\}$] 
						The trajectory is still flying at time $n_1$, but it has localized by time $n_2$ and subsequent, consequently $\Delta x_{n_1} = n_1,$
and						 $\Delta x_{n_k}=\left(x^{-1/\alpha} - 2^{1/\alpha}\right)^k\,$ with $k=2,\,3,\,\cdots, \,m-1\,.$
\item[${L^{\geq\,1/2}} = \{\ell_{n_1}<x<1/2\}$] 
						All trajectories get localized before time $n_1$, hence $\Delta x_{n_k} = \left(x^{-1/\alpha} - 2^{1/\alpha}\right)^k\,$ with $k=1,\,2,\,3,\,\cdots,\, m\,.$
\end{description}
Therefore for $n_m \geq n_{m-1}\geq\cdots\geq n_2\geq n_1\,$, integrals emerges as
\begin{widetext}
\begin{align}
&\phi_\alpha(n_1,\,n_2,\,\cdots,\,n_m)
  \nonumber \\ 
& \simeq 2 \left(n_1 n_2 \cdots n_m\right)\int\limits_{0}^{\ell_{n_m}}\diff x  
+ 2\left(n_1 n_2 \cdots n_{m-1}\right)\int\limits_{\ell_{n_m}}^{\ell_{n_{m-1}}}\diff x \,\left(x^{-\frac{1}{\alpha}} - 2^{\frac{1}{\alpha}}\right)
+ \cdots +
2\,n_1\int\limits_{\ell_{n_{2}}}^{\ell_{n_{1}}}\diff x \left(x^{-\frac{1}{\alpha}} - 2^{\frac{1}{\alpha}}\right)^{m-1}
+
2\int\limits_{\ell_{n_{1}}}^{1/2}\diff x \left(x^{-\frac{1}{\alpha}} - 2^{\frac{1}{\alpha}}\right)^{m},\nonumber\\
&
\simeq 2 \sum\limits_{j=0}^{m} 
\left( 
\prod\limits_{k=1}^{m-j} n_k 
\int\limits_{\ell_{n_{m-j+1}}}^{\ell_{n_{m-j}}}
\diff x \left(x^{-\frac{1}{\alpha}} - 2^{\frac{1}{\alpha}}\right)^{j} 
\right)
\sim
2\sum\limits_{j=0}^{m} 
\left[ 
\prod\limits_{k=1}^{m-j} n_k 
\left( \frac{\alpha}{\alpha-j}  \left(n^{j-\alpha}_{{m-j}} - n^{j-\alpha}_{{m-j+1}} \right)  \right)
\right],\label{eq:N-point-SM-summ}
\end{align}
\end{widetext}
and $n_{m+1}=0$ and $n_0=K$, where $K$ is constant. 
\subsubsection{$3$-point position auto-correlation function\label{subsec:3pcorr}}
The $3$-point position auto-correlation function can be obtained by requesting $m=3$ in \Eq{N-point-SM-summ}, 
the correlation function amounts to
\begin{equation}
  \phi_\alpha(n_1, n_2, n_3) 
  \simeq \left\{ \begin{array}{ll}
  \frac{2\,n_1\,n_2\,n_3^{1-\alpha}}{1-\alpha} - \frac{2\alpha\,n_1\,n_2^{2-\alpha}}{(2-\alpha)(1-\alpha)} 
- \frac{2\,\alpha\,n_1^{3-\alpha}}{(2-\alpha)(3-\alpha)}\,,
  &   \alpha \neq 1\,,
  \\[4mm]
 2\,n_2\,n_3 (\ln \frac{n_1}{n_2}+2) + 24\,\ln\frac{n_3}{2}- n_3^2
  \\
  + 8\,n_3  (\ln \frac{n_3}{n_2}-2)
 \,,
  &   \alpha = 1\,.
  \end{array}
\right .\label{eq:3p-SM-U-Scal}
\end{equation}
\begin{itemize}
  \item[I.]
		 For any fixed time $n_1$ and for the other two equivalent times $n_2=n_3$, we recover the asymptotic scaling of the \emph{MSD} (\cf~\Eq{SM-moments} with $p=2$) as given by
\begin{equation}\label{eq:2-MOMNT-scaln}
\langle (x_n-x_0)^2 \rangle \sim \frac{4}{2-\alpha}n^{2-\alpha}\,, \quad 0<\alpha< 2\,.
\end{equation}
 \item[II.]
 		For $n_1=n_2=n_3$, this reduces to the third moment of displacement (\cf~\Eq{SM-moments} with $p=3$),
such as
\begin{equation}\label{eq:3-MOMNT-scaln}
\langle (x_n-x_0)^3 \rangle \sim \frac{6}{3-\alpha}n^{3-\alpha}\,, \quad 0<\alpha< 3\,.
\end{equation}
\end{itemize}
A few cases of the time-composition can be defined as follows
\begin{enumerate}  
\item  $h_1 = n_2 - n_1$, \; as \;$h_1>0$, either finite or $h_1\sim n_1^{q_1},\;\; q_1\leq 1$, and $n_1\rightarrow \infty$\,.
\item $h_2= n_3 - n_2$, \; as \; $h_2>0$ either finite or $h_2\sim n_1^{q_2},\;\; q_2\leq 1,$ 
where $n_2 = n_1+h_1$, and $n_1 \rightarrow \infty$\,.
\item $n_1\geq n_2$ are fixed and set $n_3\rightarrow\infty$\,.
\end{enumerate}
If one sets all $n$ times tend to infinity for the $3$-point position auto-correlation function $\phi_\alpha(n_1,n_2,n_3)$, as given in \Eq{3p-SM-U-Scal}, the power law exponent scales in the same way as found for the third position moment (see Lemma~\ref{lem:nptscaling-SM} for $m=3$). To address this, we consider the scaling of the correlation $\phi_\alpha(n_1,n_1+h_1,n_1+h_1+h_2)$ for very large values of $n_1$. For large $n_1$, when the time lags $h_{k-1}$ are either constant or scale as $\sim n^q$ for $k=2,3$ with $q<1$, the difference among the three times becomes negligible compared to the mean.
\begin{lem}\label{lem:nptscaling-SM}
For $0<\alpha <m$, as all $n$ times tend to infinity
and 
$h_{k-1} = n_k-n_{k-1}$, for $k=2,3, \dots, m$, where $h_{k-1}$ either fixed or $\sim n_1^q,\;q<1$, the $m$-point position auto-correlation function $\phi_\alpha$ represented in \Eq{3p-SM-U-Scal}, asymptotically scales as
\begin{equation}\label{eq:nptCORR-scal}
\phi_\alpha(n_1,n_2,\cdots, n_m)\sim \frac{2\,m}{m-\alpha} n_1^{m-\alpha}\,, \quad 0<\alpha< m\,.
\end{equation}
\end{lem}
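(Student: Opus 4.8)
The plan is to work directly from the summed integral representation \Eq{N-point-SM-summ} rather than from the explicit $3$-point formula \Eq{3p-SM-U-Scal}, since the general-$m$ statement is most transparent there. First I would parametrise the ordered times by their lags, writing $n_k = n_1 + H_{k-1}$ with $H_{k-1} = \sum_{i=1}^{k-1} h_i$ and $H_0 = 0$. Under the hypothesis that each $h_{k-1}$ is either bounded or grows like $n_1^q$ with $q<1$, every partial sum obeys $H_{k-1} = O(n_1^q) = o(n_1)$, so that $n_k = n_1\bigl(1 + o(1)\bigr)$ and likewise $\ell_{n_k} = n_k^{-\alpha}\bigl(1+o(1)\bigr) = n_1^{-\alpha}\bigl(1+o(1)\bigr)$ for all $k$. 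In other words all the times, and all the slicer positions, collapse onto a single scale to leading order; the content of the lemma is that the correlation is then governed by the same ballistic-plus-localized balance that produces the $m$-th moment in Lemma~\ref{lem:pmoments}.

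Next I would isolate the two boundary terms of the sum in \Eq{N-point-SM-summ}. The fully ballistic term ($j=0$) is $2\bigl(\prod_{k=1}^{m}n_k\bigr)\ell_{n_m}$, which to leading order equals $2\,n_1^{m}\,n_1^{-\alpha} = 2\,n_1^{m-\alpha}$. The fully localized term ($j=m$) is $2\int_{\ell_{n_1}}^{1/2}\bigl(x^{-1/\alpha}-2^{1/\alpha}\bigr)^m\,dx$; replacing the integrand by $x^{-m/\alpha}$ for small $x$ and using $\alpha<m$ so that the lower endpoint dominates gives $\tfrac{2\alpha}{m-\alpha}\,\ell_{n_1}^{(\alpha-m)/\alpha} \sim \tfrac{2\alpha}{m-\alpha}\,n_1^{m-\alpha}$. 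Adding the two yields $\bigl(2 + \tfrac{2\alpha}{m-\alpha}\bigr)n_1^{m-\alpha} = \tfrac{2m}{m-\alpha}\,n_1^{m-\alpha}$, which is exactly the claimed leading coefficient and reproduces the moment scaling of \Eq{SM-moments} with $p=m$.

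Then I would control the intermediate terms $1\le j\le m-1$ and show each is of strictly lower order. The key observation is that the factor $n_{m-j}^{\,j-\alpha}-n_{m-j+1}^{\,j-\alpha}$ is a difference of nearly equal quantities; a first-order expansion $n_k^{\,j-\alpha} = n_1^{\,j-\alpha}\bigl(1 + (j-\alpha)H_{k-1}/n_1 + \cdots\bigr)$ gives $n_{m-j}^{\,j-\alpha}-n_{m-j+1}^{\,j-\alpha} \sim -(j-\alpha)\,n_1^{\,j-\alpha-1}\,h_{m-j}$. Multiplying by $\prod_{k=1}^{m-j}n_k \sim n_1^{\,m-j}$ and the prefactor $\tfrac{\alpha}{\alpha-j}$, the $(j-\alpha)$ factors cancel and each intermediate term reduces to $\sim 2\alpha\,n_1^{\,m-\alpha-1}h_{m-j} = O\bigl(n_1^{\,m-\alpha-1+q}\bigr) = o\bigl(n_1^{\,m-\alpha}\bigr)$, since $q<1$. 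Summing the finitely many such terms preserves this estimate, and the subleading lag corrections to the two boundary terms (for instance the $O(n_1^{m-\alpha-1}H_{m-1})$ piece of the $j=0$ term) are of the same lower order.

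I expect the main obstacle to be the handling of the telescoping differences $n_{m-j}^{\,j-\alpha}-n_{m-j+1}^{\,j-\alpha}$: when all the times coincide these brackets vanish identically and one recovers precisely the $m$-th moment, so the entire lag dependence of the correlation is hidden inside these near-cancellations. The delicate point is to expand each bracket to first order in $h_{m-j}/n_1$, to observe that the would-be dangerous factor $(j-\alpha)$ cancels against the prefactor $\tfrac{\alpha}{\alpha-j}$, and thereby to certify that every lag-dependent contribution is $O\bigl(n_1^{\,m-\alpha-1+q}\bigr)$, hence genuinely $o\bigl(n_1^{\,m-\alpha}\bigr)$. The hypothesis $q<1$ is exactly what guarantees this demotion; if a lag grew as fast as $n_1$ the separation between the times would no longer be negligible compared with the mean and the single-scale collapse would fail. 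Degenerate integer values $\alpha=j$ with $1\le j\le m-1$ (where $\tfrac{\alpha}{\alpha-j}$ is replaced by a logarithm, as in the $\alpha=1$ branch of \Eq{3p-SM-U-Scal}) affect only the intermediate terms and merely turn a subleading power into a subleading power times a logarithm, leaving the leading $\tfrac{2m}{m-\alpha}n_1^{m-\alpha}$ behaviour intact throughout $0<\alpha<m$.
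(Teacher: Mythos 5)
Your argument is correct, and it takes a genuinely different---and more complete---route than the paper's. The paper's entire proof is the one-line assertion that the lemma is ``a direct consequence of \Eq{3p-SM-U-Scal}'': implicitly, for $m=3$ one substitutes $n_2,n_3=n_1(1+o(1))$ into the closed three-point formula and collects coefficients, using $\frac{1}{1-\alpha}-\frac{\alpha}{(2-\alpha)(1-\alpha)}-\frac{\alpha}{(2-\alpha)(3-\alpha)}=\frac{3}{3-\alpha}$ to get $\frac{6}{3-\alpha}\,n_1^{3-\alpha}$. That argument only covers $m=3$ (and, strictly, only $\alpha\neq 1$), even though the lemma is stated for general $m$. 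You instead work from the general sum \Eq{N-point-SM-summ}: the two boundary terms ($j=0$, all trajectories ballistic; $j=m$, all localized) add up exactly to $\bigl(2+\frac{2\alpha}{m-\alpha}\bigr)n_1^{m-\alpha}=\frac{2m}{m-\alpha}\,n_1^{m-\alpha}$, while every intermediate term $1\le j\le m-1$ is demoted to $O\bigl(n_1^{m-\alpha-1+q}\bigr)=o\bigl(n_1^{m-\alpha}\bigr)$ by the first-order expansion of the telescoping bracket combined with the cancellation $\frac{\alpha}{\alpha-j}\,(\alpha-j)=\alpha$. This buys genuine generality in $m$, handles the integer values $\alpha=j$ that the closed form excludes, and exposes the mechanism behind Remark~\ref{rem:mom-corr-SM}: all lag dependence of the correlation sits in the near-cancelling intermediate terms, so the leading scaling is forced to coincide with that of the $m$-th moment. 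One cosmetic point: in the degenerate case $\alpha=j$ the intermediate bracket becomes $\ln(\ell_{n_{m-j}}/\ell_{n_{m-j+1}})\simeq \alpha\, h_{m-j}/n_1$, so no logarithm actually survives at leading order and these terms are plain powers $O\bigl(n_1^{m-\alpha-1+q}\bigr)$ as well; this only strengthens your conclusion.
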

\begin{proof}
This is a direct consequence of \Eq{3p-SM-U-Scal}.\hfill$\square$
\end{proof}
\begin{oss}\label{rem:mom-corr-SM}
For $0<\alpha<m$, where $m=2$ or $3$, as $n$ tends to infinity, the $2$- and $3$point position auto-correlation function $\phi(\alpha)$, following the $h_{k-1}$ represented in Lemma~\ref{lem:nptscaling-SM}, exhibit the same asymptotic scaling as the second moment, i.e., MSD (see \Eq{nptCORR-scal}), and the third moment of displacement (see \Eq{3-MOMNT-scaln}), which is
\begin{equation}\label{eq:sam-sclng-momnt-corr-SM}
\phi_\alpha(n_1,n_2) \sim \langle (x_n-x_0)^2\rangle, \;
\phi_\alpha(n_1,n_2,n_3) \sim \langle (x_n-x_0)^3\rangle,
\end{equation}
respectively.
\end{oss}
For the single parameter correlation function, reconsider \Eq{3p-SM-U-Scal}, rearrange some terms, introduce the time lag $h_2=n_3-n_2$ and $h_1=n_2-n_1$ and the normalization factor $n_1\,n_2^{2-\alpha}$, we find
\begin{subequations}\label{eq:U_3pt_corr}
\begin{widetext}
\begin{align}
\phi_\alpha\left(\frac{h_2}{n_2}\right)= \frac{\phi_\alpha(n_1,\,n_2,\,n_3) }{n_1\,n_2^{2-\alpha}} 
\simeq  
\frac{2}{1-\alpha}
\left[ \left(1 + \frac{h_2}{n_2}\right)^{1-\alpha}  - \frac{\alpha}{2-\alpha}  - \frac{\alpha(1-\alpha)}{(2-\alpha)(3-\alpha)}\left(1 + \frac{h_1}{n_1}\right)^{-(2-\alpha)} \right],
\qquad \alpha\neq 1,
\end{align}
\end{widetext}
for $\alpha<1$, this scales asymptotically as,
\begin{eqnarray}
\phi_\alpha\left(\frac{h_2}{n_2}\right)
\sim \left\{
     \begin{array}{lll}
       \frac{2}{1-\alpha} \left( \frac{h_2}{n_2} \right)^{1-\alpha}, &\;\text{for }
       & h_2 \gg n_2,
       \\[2mm] 
       \frac{6}{3-\alpha}\,, &\;\text{for }      
       & h_2 \ll n_2,
       \; h_1 \ll n_1,
       \\[2mm]
       \frac{4}{2-\alpha}\,,  &\;\text{for }     
       & h_2 \ll n_2,
       \; h_1 \gg n_1.
     \end{array}
     \right .\quad
     \label{eq:SM-3point-scaling}
\end{eqnarray}
\end{subequations}
We hence predict a data collapse for the $3$-point position auto-correlation when plotting the l.h.s of \Eq{SM-3point-scaling} as a function of $h_2/n_2$. 
For the regime when $h_2 \gg n_2$, one can  observe the power law as $1-\alpha$,
while for $h_2\ll n_2$, the correlation converges to some constants. Hence, \EQ{SM-3point-scaling} provides a new way of analysation for the position auto-correlation function, that depends only on a single parameter $h_2/n_2$, and the scaling form for different time-composition becomes irrelevant~\cite{GRTV17, VRTGM21}. 
The scaling of the $3$-point position auto-correlation function captures salient features which commonly observed in the anomalous transport dynamics, for instance when all times $n_1,\,n_2$ and $n_3$ are far separated and large enough, one commonly observes that correlation 
grows
with $n_1 n_2^{2-\alpha}$, like $1/(n_1 n_2^{2-\alpha})$, 
in accordance with the prediction of \Eq{SM-3point-scaling}.
In section~\ref{subsec:sclngtest} we investigate how far these qualitative findings are substantial for quantitative comparison to the LLG~\cite{BF99,BCV10} that do not have mathematical findings on the position auto-correlation function.

\subsubsection{Moments of velocity}
\label{sec:Moments-v-SM} 
The velocity of any point of the SM is either 
$+1$~or~$-1$ and moments of the velocity can be determined by evaluating
\begin{eqnarray}\label{eq:V-moment}
\langle v^p(n)\rangle = 2\sum\limits_{k=1}^{n}v_k^p(n)\Delta_k(\alpha) + 2\sum\limits_{k=n+1}^{\infty}v_k^p(n)\Delta_k(\alpha)\, ,
\end{eqnarray}
where $v_k(n)$ is the velocity at time $n$ of particle with $x \in [\ell^+_{k-1}, \ell^+_{k})$ where $\ell^+_{k}=1-\ell_{k}$ and $\Delta_k(\alpha)= \ell^+_{k} - \ell^+_{k-1} =\alpha/(k^{\alpha+1})(1+o(1))$. The velocity of the particle is given by
\begin{equation}\label{eq:velocity}
v_k(n)= \mathcal{I}_{\{n< k\}}-(-1)^{n-k}\mathcal{I}_{\{n \ge k\}} ,
\end{equation}
where $\mathcal{I}_A$ is the indicator of the event $A$. Then by using \Eq{velocity} in \eq{V-moment}, the moments of velocity switch between even and odd values of $p$. 
The even moments of velocity scale asymptotically like 
\begin{subequations}\label{eq:SM-vlcty-asymt-bhvr}
\begin{equation}
\langle v^p(n)\rangle \sim 1,\quad \text{as} \quad  n\rightarrow\infty,  \quad \text{even}\; p\geq 2.  
\end{equation}
The odd $p\geq 1$ moments of velocity  scales  asymptotically as
\begin{equation}\label{eq:asympt_behv}
\langle v^p(n)\rangle \sim  \left\{
     \begin{array}{rlll}
       1 - 4 \,R_\alpha\, ,       
       & \text{ for }
       & \text{even} 
       & n,
       \\[2mm]
       -1 + 4 \,R_\alpha\, ,       
       & \text{ for }
       & \text{odd} 
       & n,
     \end{array}
     \right .
\end{equation}
\end{subequations}
as $n$ changes between even and odd values, where
\begin{equation*}
R_\alpha = \sum\limits_{k=1}^{\infty}\Delta_{2k}(\alpha).
\end{equation*}
\subsubsection{Velocity auto-correlation function}
\label{sec:Corr-v-SM}
The velocity of any point of the SM is either $+1$ or $-1$, and its auto-correlation is defined by
\begin{align}
&\langle v(n_1)v(n_2)\rangle =\nonumber \\
&\quad 2\sum\limits_{k=1}^{n} v(n_1)v_k(n_2)\Delta_k(\alpha) + 2\sum\limits_{k=n+1}^{\infty} v(n_1)v_k(n_2)\Delta_k(\alpha),
\end{align}
where $v_k(l)$, the velocity at time $l$ of a particle with position
$x\in[\ell^+_{k-1},\ell^+_k)$, is given in \Eq{velocity}. For $n_1=0$, we have $v(0)=1$, hence
\begin{equation}
\langle v(0)v(n_2)\rangle =
 2\sum\limits_{k=1}^{n} v_k(n_2)\Delta_k(\alpha) + 2\sum\limits_{k=n_2+1}^{\infty} v_k(n_2)\Delta_k(\alpha).
\end{equation}
Calculations analogous to the previous ones, now show that the velocity auto-correlation function oscillates asymptotically in $n_2$ between two values. 
Therefore velocity auto-correlation follows the same asymptotic scaling, \Eq{asympt_behv}, as in the odd moments of velocity
\begin{equation}
\langle v(0)v(n_2)\rangle \sim \langle v^p(n) \rangle,  \quad \text{as} \quad  n\rightarrow\infty,  \quad \text{odd}\; p\geq 1.
\end{equation}
The $2$-times velocity auto-correlation function  is also asymptotically split into two cases
\begin{itemize}
	\item
	when $n_1$ and $n_2$ are either both even or both odd, then
\begin{equation}\label{eq:SM-VACF_EE-OO}
\langle v(n_1)v(n_2)\rangle \rightarrow 1, \qquad \text{as} \quad n_1\rightarrow \infty,\quad n_2>n_1,
\end{equation}
	\item
	when one of the two times is even and the other is odd, then
\begin{equation}\label{eq:SM-VACF_EO-OE}
\langle v(n_1)v(n_2)\rangle \rightarrow -1, \qquad \text{as} \quad n_1\rightarrow \infty,\quad n_2>n_1\,.
\end{equation}
\end{itemize}
\subsection{The fly-and-die dynamics}
\label{sec:FnD-model}
In the FND dynamics, we label trajectories by their initial position, $x_0$.  Until time
$t_c(x_0)$ such a trajectory moves along the positive $x$ axis with
unit velocity.  At time $t_c(x_0)$ it stops and remains at position
$x_0 + t_c(x_0)$ for all later times.  Accordingly, we call this FND dynamics.  Its position at time $t$ will be denoted
as
\begin{subequations}
\label{eq:FnD-EOM}
\begin{equation}
  x( x_0, t ) = \left\{
  \begin{array}{lcl}
    x_0 + t  \, ,        & \text{for} & t \leq t_c(x_0)\,,
    \\[2mm]
    x_0 + t_c( x_0 )\, , & \text{for} & t \geq  t_c(x_0)\,.
  \end{array}
  \right .
\end{equation}
Superdiffusive motion is expected to emerge when the distribution of
the times for the flights, $t_c(x_0)$ has a power-law tail.  To be
concrete, we consider here the case
\begin{equation}\label{eq:FnD_tail}
  t_c(x_0) = \left( \frac{l}{ x_0 } \right)^{1/\mu},
\end{equation}
\end{subequations}
with initial conditions, $x_0$ uniformly distributed in the interval $[0, 1]$, and $\mu >0$.  In the following, we explore the position and velocity moments and correlations of this ensemble of trajectories. The ensemble average is denoted by $\langle \cdot \rangle$.
The probability $P(>t)$ to perform a flight longer than $t$ 
amounts to the fraction of initial condition $x_0$ with $t_c(x_0) > t$
such that
\begin{equation}\label{eq:FnD_Probability}
  P(>t) = x_0(t) = \frac{ l }{ t^{\mu}  } \, .
\end{equation}
\subsubsection{$p^{th}$ position moments }
\begin{lem}\label{lemma:FnD-momnts}
For $\mu>0$, the p$^{th}$ position moment of the FND for the trajectories starting at initial position $x_0$, asymptotically scales as
\begin{equation}
  \langle | \Delta x(t) |^p \rangle
  \sim 
  \left\{
  \begin{array}{lll}
    \frac{\mu}{\mu - p} \: l^{p/\mu}\,,
    & \text{ for }
    & p < \mu \, ,
    \\[2mm]
    l \: \ln\frac{t^\mu}{l} \,,
    & \text{ for }
    & p = \mu \, ,
    \\[2mm]
    \frac{p \, l}{p - \xi}  \: t^{p - \mu} \,,
    & \text{ for }
    & p > \mu \, .
  \end{array}
  \right .
\label{eq:FnD-moment-scaling}
\end{equation}
\end{lem}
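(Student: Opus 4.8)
The plan is to evaluate the ensemble average directly as an integral over the uniformly distributed initial condition $x_0\in[0,1]$, exploiting the fact that at any fixed time $t$ each trajectory is in exactly one of two states. Since the motion proceeds with unit velocity along the positive axis, $\Delta x(t)=x(x_0,t)-x_0\ge 0$ by \Eq{FnD-EOM}, so the absolute value is immaterial. First I would split the ensemble according to whether a trajectory is still flying or has already localized at time $t$: by \Eq{FnD_tail} the condition $t_c(x_0)>t$ is equivalent to $x_0<l/t^\mu=:x_c$, which is precisely the survival fraction $P(>t)$ of \Eq{FnD_Probability}. For $x_0<x_c$ the trajectory is still flying and $\Delta x(t)=t$, while for $x_0>x_c$ it has died and $\Delta x(t)=t_c(x_0)=(l/x_0)^{1/\mu}$. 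Hence
\begin{equation*}
\langle |\Delta x(t)|^p\rangle = \int_0^{x_c}\! t^p\,\diff x_0 + \int_{x_c}^{1}\! \Bigl(\tfrac{l}{x_0}\Bigr)^{p/\mu}\,\diff x_0\,.
\end{equation*}

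Next I would evaluate the two elementary pieces. The flying contribution is immediate, $t^p\,x_c=l\,t^{p-\mu}$. For the localized contribution I set $q=p/\mu$ and compute the power integral $l^{q}\int_{x_c}^1 x_0^{-q}\,\diff x_0$; for $q\neq 1$ this equals $l^{q}(1-x_c^{1-q})/(1-q)$, and substituting $x_c=l\,t^{-\mu}$ gives $l^{q}x_c^{1-q}=l\,t^{p-\mu}$, so the two contributions combine into the exact closed form
\begin{equation*}
\langle |\Delta x(t)|^p\rangle = \frac{\mu}{\mu-p}\,l^{p/\mu} + \frac{p}{p-\mu}\,l\,t^{p-\mu}\,,\qquad p\neq\mu\,.
\end{equation*}
The borderline $p=\mu$ must be treated separately, since there $q=1$ and the power integral becomes logarithmic, $\int_{x_c}^1 x_0^{-1}\,\diff x_0=\ln(t^\mu/l)$, yielding $\langle|\Delta x(t)|^p\rangle = l\bigl(1+\ln(t^\mu/l)\bigr)$.

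Finally I would read off the three regimes as $t\to\infty$ from these exact expressions. For $p<\mu$ the power $t^{p-\mu}\to 0$ and the constant term dominates, giving $\tfrac{\mu}{\mu-p}l^{p/\mu}$; for $p>\mu$ the power diverges and dominates the constant, giving $\tfrac{p\,l}{p-\mu}\,t^{p-\mu}$ (matching the stated line, where the denominator $p-\xi$ should read $p-\mu$); and for $p=\mu$ the logarithm dominates the additive constant, giving $l\,\ln(t^\mu/l)$. This reproduces all three cases of \Eq{FnD-moment-scaling}.

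The computation is essentially routine and structurally mirrors the Slicer-map calculation of Lemma~\ref{lem:pmoments}, so there is no deep obstacle. The only points requiring care are the separate logarithmic handling of the marginal case $p=\mu$, and the bookkeeping that correctly identifies which of the two terms in the exact formula is subdominant in each regime, so that the claimed leading-order scaling is the genuine asymptotic limit and not a subleading correction. One should also note that the split at $x_c=l/t^\mu$ presupposes $x_c\le 1$, i.e. $t^\mu\ge l$, which holds automatically in the $t\to\infty$ limit relevant to the statement.
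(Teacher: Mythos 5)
Your proof is correct and follows essentially the same route as the paper's own derivation in Appendix~\ref{app:pth-momnt-FnD}: split the ensemble at $x_c=l/t^\mu=P(>t)$ into flying and dead trajectories, evaluate the two elementary integrals, and read off the three asymptotic regimes from the resulting exact closed form $\frac{\mu}{\mu-p}l^{p/\mu}+\frac{p}{p-\mu}\,l\,t^{p-\mu}$. You in fact go slightly beyond the paper by carrying out the logarithmic case $p=\mu$ explicitly (the paper dismisses it as an ``analogous derivation''), by noting the validity condition $t^\mu\ge l$, and by correctly identifying the denominator $p-\xi$ in the statement as a typo for $p-\mu$.
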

\begin{proof}
See appendix \ref{app:pth-momnt-FnD}\,.
\end{proof}

More in detail, for a specific case $p=2$, the MSD scales $\langle | \Delta x(t) |^2 \rangle
  \sim  t^\gamma$, where $\gamma = 2 - \mu$ and $0<\gamma<2$, this exhibits the wide spectrum of diffusion, when the transport exponent $\gamma <1$, this yields to sub-diffusion; $\gamma = 1$ this grows linearly in time (\ie normal diffusion), and for $\gamma > 1$ it is super-diffusive. Thus the FND dynamics capture all the transport regimes computed for SM. The FND dynamics for $p>\mu$, when adopting $l\equiv 2$ and $\mu=\alpha$, captures all the position moments which are computed for SM~\Eq{SM-moments}.
\subsubsection{Generalized position auto-correlation function}
The generalized (or $n$-point) position autocorrelation function for the FND dynamics is defined as
\begin{align}
  &\rho_\mu(t_1, t_2, \cdots t_n)\,, \\
  &=\langle \Delta x(t_1) \; \Delta x(t_2) \cdots \Delta x(t_n)\rangle  \,,
\nonumber\\[2mm]
  &=
  \langle \left( x( x_0, t_1 ) - x_0 \right) \: \left( x( x_0, t_2 ) - x_0 \right) \cdots \: \left( x( x_0, t_n ) - x_0 \right)\rangle \,,
\nonumber\\[2mm]
  &=
  \int_0^1 \diff x_0 \: \left( x( x_0, t_1 ) - x_0 \right) \left( x( x_0, t_2 ) - x_0 \right) \cdots \left( x( x_0, t_n ) - x_0 \right),
\nonumber
\end{align}
where it is assumed that $t_1<t_2<\cdots < t_n$. To evaluate the integral we follow the convention that $t_n$ is always larger or equal to $t_1$.
Accordingly, we split the integration range into $n$ intervals
\begin{description}
\item[$0 < x_0 < P(>t_n)$ ] 
	The trajectories are still flying at time $t_n$ such that
  $\Delta x(t_1) = t_1, \:\Delta x(t_2) = t_2, \cdots, \Delta x(t_{n-1}) = t_{n-1}, \:\Delta x(t_n) = t_n$.
\item[$P(>t_n) < x_0 < P(>t_{n-1})$ ] 
	The trajectories are still flying until time $t_{n-1}$ but it has died by the time $t_{n}$.
  Consequently, $\Delta x(t_1) = t_1,\:\Delta x(t_2) = t_2,\: \cdots, \Delta x(t_{n-1}) = t_{n-1}$ and $\Delta x(t_n) = t_c(x_0)$.\\
 \vdots
\item[$P(>t_{1}) < x_0 < 1$ ] 
	The trajectories died before $t_1$. Consequently, $\Delta x(t_{1}) = \Delta x(t_2) = \cdots = \Delta x(t_n) = t_c(x_0)$.
\end{description}
Splitting the integral and performing a calculation allows us to interpret it as follows
\begin{widetext}
\begin{align}
\rho_\mu(t_1,t_2,\cdots t_n) &= (t_1 t_2 \cdots t_n)\int\limits_0^{l/t_n^\mu} \diff x_0 
+ 
(t_1t_2 \cdots t_{n-1})\int\limits_{l/t_{n}^\mu}^{l/t_{n-1}^\mu} \diff x_0 \left(\frac{l}{x_0}\right)^{\frac{1}{\mu}}
+
(t_1t_2 \cdots t_{n-2})\int\limits_{l/t_{n-1}^\mu}^{l/t_{n-2}^\mu} \diff x_0 \left(\frac{l}{x_0}\right)^{\frac{2}{\mu}}
+
\cdots + 
\int\limits_{l/t_{1}^\mu}^1 \diff x_0 \left(\frac{l}{x_0}\right)^{\frac{n}{\mu}}, \nonumber\\
&= \sum\limits_{j=0}^{n} \left( \prod\limits_{k=1}^{n-j} t_k
\int\limits_{l/{t^\mu_{n-j+1}}}^{l/{t^\mu_{n-j}}} \diff x_0 \left(\frac{l}{x_0}\right)^{\frac{j}{\mu}} \right).\label{eq:M-point-FnD-summ}
\end{align}
\end{widetext}
Simple integration allows us to write a general expression of $n$-point position auto-correlation function as
\begin{eqnarray}\label{eq:npt-corr-FnD-solv}
\rho_\mu(t_1,t_2,\cdots t_n) = l \sum\limits_{j=0}^{n}  \prod\limits_{k=1}^{n-j}t_k
\left( \frac{\mu}{\mu-j} \left( t_{n-j}^{j-\mu}  - t_{n-j+1}^{j-\mu} \right)\right),\quad
\end{eqnarray}
where $t_{n+1} = \infty$ and $t_0 = l^{1/\mu}$.
When adopting $l\equiv 2$ and $\mu=\alpha$, correlation \Eq{npt-corr-FnD-solv}
yields the same scaling as find for the $m$-point position auto-correlation of the SM, \Eq{N-point-SM-summ}. Therefore the higher order position auto-correlation function of the SM and the FND asymptotically scales in the same trend.
In the subsequent, we derive the $3$-point position correlation function. Upon setting $n=3$ and performing calculations on \Eq{npt-corr-FnD-solv}. Consequently, when $\mu \neq 1$, we find
\begin{equation}\label{eq:3p-FnD-Corr}
\rho_\mu(t_1,\,t_2,\,t_3) \simeq 
\frac{l\,t_1\,t_2\,t_3^{1-\mu}}{1-\mu} 
- \frac{l\,\mu\,t_1\,t_2^{2-\mu}}{(2-\mu)(1-\mu)} 
- \frac{l\,\mu\, t_1^{3-\mu}}{(2-\mu)(3-\mu)} \,.
\end{equation}
\begin{itemize}
    \item[I.]
      		 At a fixed time $t_1$, when considering two equivalent subsequent times $t_2$ and $t_3$, the \emph{MSD} (\cf~\Eq{FnD-moment-scaling} with $p=2$) exhibits an asymptotic scaling 
\begin{equation}\label{eq:2-MOMNT-scaln-FnD}
\langle | \Delta x(t) |^2 \rangle \sim \frac{2\,l}{2-\alpha}t^{2-\mu}\,, \quad 0<\mu< 2\,.
\end{equation}
	\item[II.]
			For $t_1=t_2=t_3$, this reduces to the third moment for the displacement, \Eq{FnD-moment-scaling} with $p=3$,
\begin{equation}\label{eq:3-MOMNT-scaln-FnD}
  \langle | \Delta x(t) |^3 \rangle =
  \frac{3\, l}{3-\mu} t^{3-\mu},\quad 0<\mu<3.
\end{equation}
\end{itemize}
Some functional relationships between the times are defined as follows
\begin{enumerate}
\item[1.] 
varying $t_1$, $t_2$ and $t_3$ while keeping a fixed time lag, with $h_1=t_2-t_1$ and $h_2=t_3-t_2$.
\item[2.] varying $t_1$ while setting $t_2=t_1+t_1^{q_1}$ and $t_3=t_1+t_1^{q_1}+t_1^{q_2}$ for some fixed value of $q_1<1$ and $q_2<1$.
\item[3.] fixing $t_1 \leq t_2$ fixed while setting $t_3$ to vary. 
\end{enumerate}
Like the SM, when all \( n \)-values approach infinity for the 3-point position auto-correlation function \(\rho_\mu(t_1, t_2, t_3)\), as defined in \Eq{3p-FnD-Corr}, the power law exponent exhibits the same scaling behavior as that found for the third position moment (see Lemma \ref{lemma:nptscaling-FnD} for the case where \( m = 3 \)). To better understand this, we need to look at how the correlation \(\rho_\mu(n_1, n_1 + h_1, n_1 + h_1 + h_2)\) behaves when \( n_1 \) is very large.
In this context, for large \( n_1 \), if the time lags \( h_{k-1} \) (for \( k = 2, 3 \)) are either constant or grow as \( n^q \) with \( q < 1 \), the differences between the three times \( n_1 \), \( n_1 + h_1 \), and \( n_1 + h_1 + h_2 \) become insignificant compared to the average value of \( n_1 \). This means that as \( n_1 \) increases, the relative differences between these times diminish, leading to a simpler scaling relationship for the correlation function.
\begin{lem}\label{lemma:nptscaling-FnD}
For $0<\alpha<n$, as all $t$ times tend to infinity and $h_{k-1} = t_k - t_{k-1}$, for $k=2,3, \cdots, n,$ where $h_{k-1}$ is either fixed or $\sim t_1^q$, with $q<1$, the $n$-point position auto-correlation function $\rho_\mu$ as represented in Equation \Eq{3p-FnD-Corr}, asymptotically scales as
\begin{equation}\label{eq:nptCORR-scal-FnD}
    \rho_\mu(t_1,t_2,\cdots, t_n) \sim \frac{n\,l}{n-\mu} t_1^{n-\mu}\,, \quad 0<\mu<n\,.
\end{equation}
\end{lem}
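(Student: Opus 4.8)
The plan is to work directly from the closed-form $n$-point expression \Eq{npt-corr-FnD-solv}, substitute the prescribed time increments, and isolate the leading power of $t_1$. First I would parametrize the times by $t_k = t_1 + \sum_{i=1}^{k-1} h_i$ for $k=1,\dots,n$ (with $t_{n+1}=\infty$ and $t_0=l^{1/\mu}$ as in the excerpt). Because each lag is either fixed or obeys $h_i\sim t_1^{q}$ with $q<1$, every finite time satisfies $t_k=t_1\bigl(1+o(1)\bigr)$, so $t_k/t_1\to 1$ as $t_1\to\infty$. A naive degree count already suggests that each summand scales as $t_1^{\,n-\mu}$, since the $j$-th term carries $n-j$ factors of $t_k$ together with the power $j-\mu$; the real work is to determine which summands actually attain this order.

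The key observation is that the bracket $t_{n-j}^{\,j-\mu}-t_{n-j+1}^{\,j-\mu}$ undergoes a near-cancellation for interior indices. For $1\le j\le n-1$ both endpoints are finite and comparable, with $t_{n-j+1}=t_{n-j}+h_{n-j}$, so a first-order expansion (mean value theorem applied to $t\mapsto t^{\,j-\mu}$) gives $t_{n-j}^{\,j-\mu}-t_{n-j+1}^{\,j-\mu}\simeq(\mu-j)\,t_{n-j}^{\,j-\mu-1}\,h_{n-j}$. Multiplying by the prefactor $\frac{\mu}{\mu-j}\prod_{k=1}^{n-j}t_k$, the factor $(\mu-j)$ cancels and the interior terms are seen to be of order $\mu\,t_1^{\,n-\mu-1}\,h_{n-j}\sim t_1^{\,n-\mu-1+q}$, which is $o(t_1^{\,n-\mu})$ precisely because $q<1$. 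Hence the interior summands are asymptotically negligible.

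Only the two boundary indices survive at leading order, and for the opposite reason: the cancellation is broken because one endpoint of the difference is degenerate. For $j=0$ one has $t_{n+1}^{-\mu}=\infty^{-\mu}=0$, so the bracket reduces to $t_n^{-\mu}\sim t_1^{-\mu}$ and the term contributes $l\prod_{k=1}^{n}t_k\cdot t_n^{-\mu}\sim l\,t_1^{\,n-\mu}$. For $j=n$ the lower endpoint is the constant $t_0=l^{1/\mu}$, so $t_0^{\,n-\mu}$ is subleading and the bracket behaves like $-t_1^{\,n-\mu}$; with the finite prefactor $\frac{\mu}{\mu-n}$ (finite since $\mu<n$) this contributes $l\,\frac{\mu}{n-\mu}\,t_1^{\,n-\mu}$. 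Adding the two surviving pieces yields the coefficient $l\bigl(1+\frac{\mu}{n-\mu}\bigr)=\frac{nl}{n-\mu}$, which is exactly \Eq{nptCORR-scal-FnD}; specializing to $n=3$ reproduces the recombination one obtains by substituting $t_2,t_3\sim t_1$ directly into \Eq{3p-FnD-Corr}.

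The main obstacle I anticipate is the bookkeeping needed to make the \emph{interior terms are negligible} claim uniform rather than merely term-by-term: one must control the mean-value error simultaneously over all $j$ and confirm that a sum of finitely many $o(t_1^{\,n-\mu})$ contributions remains $o(t_1^{\,n-\mu})$. A secondary subtlety is the exceptional integer values $\mu\in\{1,2,\dots,n-1\}$, where the factor $\frac{\mu}{\mu-j}$ in an interior term is singular and the corresponding bracket degenerates into a logarithm (as in the $\alpha=1$ branch of \Eq{3p-SM-U-Scal}); there one checks that the logarithmic contribution is still $o(t_1^{\,n-\mu})$, so the leading coefficient $\frac{nl}{n-\mu}$ is unaffected.
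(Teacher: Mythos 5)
Your proof is correct, and it takes a genuinely different and more general route than the paper's. The paper disposes of this lemma in one line --- ``this follows directly from \Eq{3p-FnD-Corr}'' --- i.e.\ it substitutes $t_2,t_3\sim t_1$ into the explicit three-term expression for the $3$-point correlation and recombines the rational coefficients,
\begin{equation*}
\frac{1}{1-\mu}-\frac{\mu}{(2-\mu)(1-\mu)}-\frac{\mu}{(2-\mu)(3-\mu)}=\frac{3}{3-\mu}\,,
\end{equation*}
which strictly speaking establishes only the case $n=3$; the general-$n$ claim is left implicit. You instead work from the closed-form $n$-point sum \Eq{npt-corr-FnD-solv} and expose the structural mechanism: the mean-value cancellation of the factor $(\mu-j)$ that renders every interior summand $1\le j\le n-1$ of order $l\,\mu\,t_1^{n-\mu-1}h_{n-j}=o\bigl(t_1^{n-\mu}\bigr)$, and the two degenerate boundary summands, $j=0$ (where $t_{n+1}=\infty$ kills one endpoint) and $j=n$ (where $t_0=l^{1/\mu}$ is constant), whose contributions $l\,t_1^{n-\mu}$ and $\frac{\mu\,l}{n-\mu}\,t_1^{n-\mu}$ add up to the stated coefficient $\frac{n\,l}{n-\mu}$. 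This buys you an actual proof for arbitrary $n$ --- which is what the lemma asserts but the paper never verifies beyond $n=3$ --- together with an explanation of \emph{why} the correlation inherits the $n$-th moment scaling (only the moment-like boundary terms survive). The extra cost is bookkeeping, which is harmless: the sum has finitely many ($n+1$) terms, so finitely many $o\bigl(t_1^{n-\mu}\bigr)$ errors stay $o\bigl(t_1^{n-\mu}\bigr)$, and your treatment of the exceptional integer values $\mu=j$, where the interior term degenerates into a logarithm that is still $o\bigl(t_1^{n-\mu}\bigr)$, covers a case the paper's argument silently skips.
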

\begin{proof}
This follows directly from \Eq{3p-FnD-Corr}.\hfill$\square$
\end{proof}
\begin{oss}\label{rem:mom-corr-SM}
For $0<\mu<n$, with $n=2,3$ when all times $t$'s tend to infinity, the $2$- and $3$-point position auto-correlation function $\rho_\mu$,
following the $h_{k-1}$ represented in Lemma~\ref{lemma:nptscaling-FnD}, have the same asymptotic scaling, \Eq{nptCORR-scal-FnD}, as for the second i.e., MSD, \Eq{2-MOMNT-scaln-FnD} and third moments of displacement, \Eq{3-MOMNT-scaln-FnD}, i.e.,
\begin{equation}\label{eq:sam-sclng-momnt-corr-FnD}
\rho_\mu(t_1,t_2) \sim \langle | \Delta x(t) |^2 \rangle, \;\; \text{and} \;\;
\rho_\mu(t_1,t_2,t_3) \sim \langle | \Delta x(t) |^3 \rangle,
\end{equation}
respectively.
\end{oss}
Therefore, it is observed that when time is significantly large, the position moments and correlation function of the SM and the FND scale in the same way. Likewise the SM, a single parameter dependent scaling of the correlation function, reconsider \Eq{3p-FnD-Corr} and perform calculations, introducing the time lag $h_2=t_3-t_2$ and $h_1=t_2-t_1$, and normalizing by the factor $t_1t_2^{2-\mu}$, the position auto-correlation as a function of $h_2/t_2$ entails as
\begin{subequations}\label{eq:U_3pt_corr_FnD}
\begin{widetext}
\begin{align}
\rho_\mu\left(\frac{h_2}{t_2}\right) = 
\frac{\phi_\mu(t_1,t_2,t_3)}{t_1t_2^{2-\mu}} \simeq \frac{l}{1-\mu} 
\left[ 
\left(1+\frac{h_2}{t_2}\right)^{1-\mu} 
-\frac{\mu}{2-\mu}
-\frac{\mu(1-\mu)}{(2-\mu)(3-\mu)} \left(1 + \frac{h_1}{t_1} \right)^{-(2-\mu)}\right],\qquad  \mu\neq 1. 
\end{align}
\end{widetext}
In the large time limit the asymptotic scaling for large and small values of $h_2/t_2$ and $h_1/t_1$ for $\mu<1$ yields as
\begin{align}
\rho_\mu\left(\frac{h_2}{t_2}\right) \simeq \left\{
     \begin{array}{lll}
       \frac{l}{1-\mu} \left( \frac{h_2}{t_2} \right)^{1-\mu}, &\;\text{for  } 
       & h_2 \gg t_2,
       \\[2mm] 
       \frac{3l}{3-\mu}\,, &\;\text{for  }       
       & h_2 \ll t_2, \;
        h_1 \ll t_1, 
       \\[2mm]
       \frac{2l}{2-\mu}\,, & \;\text{for  }             
       & h_2 \ll t_2,\;
        h_1 \gg t_1.
     \end{array}
     \right .
     \label{eq:FnD-3point-scaling}
\end{align}
\end{subequations}

This scaling is identical to the SM expression when $l\equiv 2$, \Eq{SM-3point-scaling} for large and small times. Therefore the asymptotic scaling of the 3-point position correlation as a function of $h_2/t_2$, the SM and the FND scales is in a similar fashion (\cf~\Eqs{SM-3point-scaling} and  \eq{FnD-3point-scaling}). Hence we can predict data collapse of $3$-point position correlation irrespective of the times relationship. In section \ref{subsec:sclngtest}, we emphasize this fact by comparing the qualitative prediction with LLG~\cite{BCV10,BF99,BFK00}.



\subsubsection{Moments of velocity}
In the FND dynamics, the velocity of each trajectory is $+1$, these flying trajectories contribute to the velocity moments where other trajectories stop $v=0$ and do not contribute.
Therefore only those trajectories will contribute those are still flying $v=1$. 
The moments of the velocity $\langle v^p(t) \rangle$ obtain as
\begin{align*}
\langle v^p(t) \rangle &= \langle | v(x_o,t) - v_0 |^p \rangle \,, \\ 
&= \int\limits_0^1 dx_0 \,|v(x_o,t) - v_0 |^p 
= \int\limits_0^{P(>t)} dx_0 \, t^p 
= \int\limits_0^{l/t^\mu} dx_0 \,,
\end{align*}
which asymptotically scales as
\begin{align}\label{eq:FnD-velocity-momnts}
\langle v^p(t) \rangle \sim l\,t^{-\mu}, \qquad p\geq \mu\,.
\end{align}
This behaviour is not shared by the velocity moments of the SM, \EQ{SM-vlcty-asymt-bhvr}. 

\subsubsection{Velocity auto-correlation function}
The velocity of each flying trajectory in the FND dynamics is $+1$.
The trajectories are flying with the velocity $v=1$, till they stop, $v=0$. Therefore only those trajectories contribute to the velocity auto-correlation functions with $t_1\leq t_2 \leq \cdots t_n$, that are still flying at time $t_n$. Thus denoting velocity correlation $\rho_v(t_1,t_2,\cdots t_n)$, we obtain
\begin{align*}
&\rho_v(t_1,t_2,\cdots t_n) \\
&= \langle \Delta v(t_1)\Delta v(t_2)\cdots\Delta v(t_n) \rangle  \,, \\
&=\langle \left( v( x_0, t_1 ) - v_0 \right) \: \left( v( x_0, t_2 ) - v_0 \right)\cdots \left( v( x_0, t_n ) - v_0 \right) \rangle \,,
\\
&=\int\limits_0^{1} \diff \,x_0\;\left( v( x_0, t_1) - v_0 \right) \: \left( v( x_0, t_2 ) - v_0 \right)\cdots \left( v( x_0, t_n ) - v_0 \right)\,,  
\\
&= \int\limits_0^{P(>t_n)} \diff \,x_0\;\left( v( x_0, t_1) - v_0 \right) \: \left( v( x_0, t_2 ) - v_0 \right)\cdots \left( v( x_0, t_n ) - v_0 \right) \,, \\
&= \int\limits_0^{l/t_n^\mu} \diff x_0\, , 
\end{align*}
therefore velocity auto-correlation asymptotically scales as
\begin{align}
\rho_v(t_1,t_2,\cdots t_n) \simeq l\,t_n^{-\mu}\,, \qquad n > \mu \,,
\end{align}
where $t_n = t_{n-1} + h_{n-1}$, $n \in \{2,3,\cdots\}$ and $h>0$.

This exhibits the same power law tail, $-\mu$ for any order of velocity correlation function, moreover velocity moments and correlation asymptotically scale in same power law behavior (\cf~\Eq{FnD-velocity-momnts}). 
For $n=2$, we find $2$-point velocity correlation function $\rho_v(t_1,t_2) \simeq l\,t_1^{-\mu}\,, \;0<\mu<2$. This behaviour is not shared by the $2$-time velocity auto-correlation function of the SM (\cf~\Eqs{SM-VACF_EE-OO} or \eq{SM-VACF_EO-OE}), thus can be used to distinguish the transport processes.

\begin{figure*}[t]
  \centering
    \includegraphics[width=8.5cm, height=6cm]{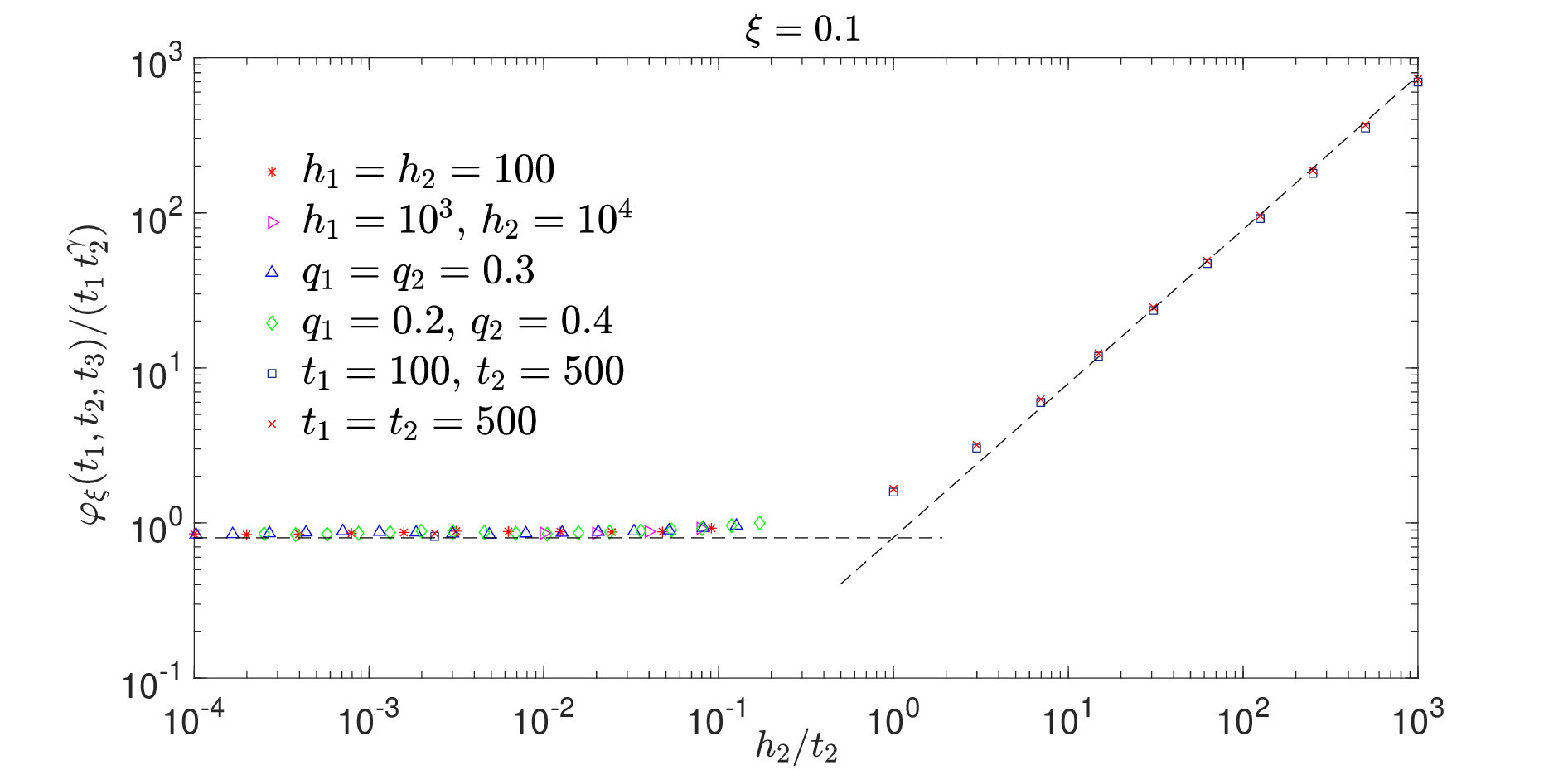}
    \includegraphics[width=8.5cm, height=6cm]{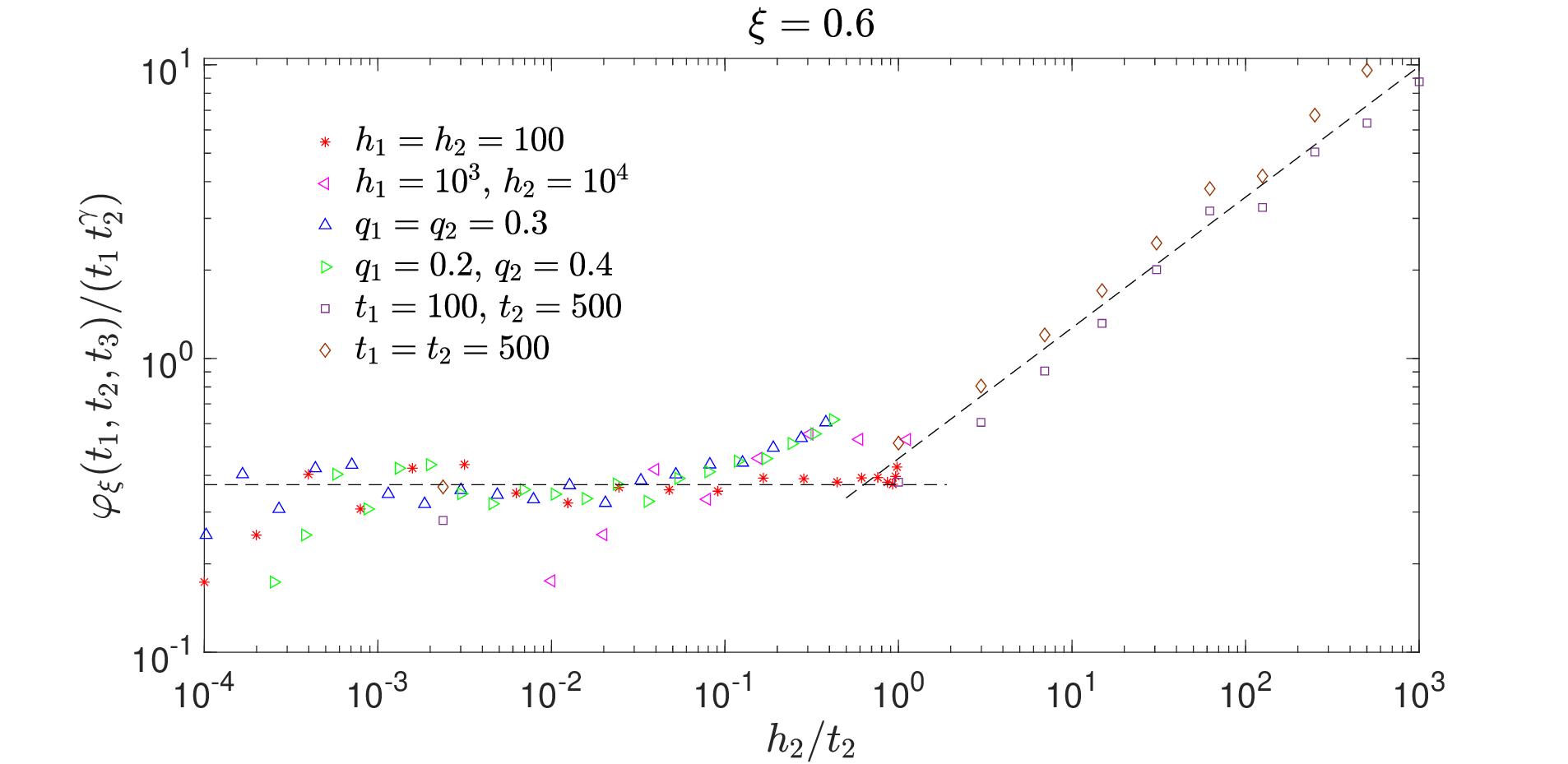}
     \caption{\label{fig:3-point-data}The $3$-point position auto-correlation functions $\varphi_\xi(t_1, t_2,t_3)$ of the LLG are plotted for $\xi=0.1$ (left panel) and $\xi=0.6$ (right panel). We obtain a data collapse for a vast data set of combinations of $t_1,\,t_2$ and $t_3$ by plotting the left-hand side of \Eqs{U_3pt_corr} or \eq{U_3pt_corr_FnD} as a function of $h_2/t_2$. The different symbols denote data for $d_0 = 0.1$, 
where we varied the time $t_1$, while setting $t_2=t_1+h_1$ and $t_3=t_1+h_1+h_2$, for $h_1$ and $h_2$ any positive constants (\cf~legend) for the regime $h_2<t_2$, and also we varied $t_1$ while setting $t_2 = t_1 + t_1^{q_1}$ and $t_3 = t_1 + t_1^{q_1} + t_1^{q_2}$ (\cf~legend) for the regime $h_2<t_2$. Similarly, for $h_2>t_2$, keep $t_2\geq t_1$ fixed and vary $t_3$ (\cf~legend). The dashed lines show the parameter dependence \Eq{U_3pt_corr} predicted by the SM (or so FND, \Eq{U_3pt_corr_FnD}). 
     }
    \end{figure*}
\section{Stochastic Process}
\subsection{The L\'evy-Lorentz gas\label{sec:LLg}}
The LLG was introduced in Barkai and Fleurov~\cite{BF99} as a one-dimensional model for anomalous
transport in semiconductor devices where diffusion
arises from scattering at dislocations at fixed random positions.
Subsequently, it has been investigated by many
authors \cite{BFK00,BCV10}. The LLG is a one-dimensional model
that comprises ballistic flights between scatterers at fixed
random positions. The distances $d$ between neighboring
scatterers are independently and identically distributed
random variables sampled from a L\'evy distribution with
probability density
\begin{equation}\label{eq:probdnsty}
\lambda(d) \equiv  \frac{\xi}{d_0} \left(\frac{d}{d_0}\right)^{-\left(\xi+1\right)}, \qquad  d\in [d_0,\infty),
\end{equation}
where $\xi>0$ and $d_0$ is the minimum distance between
scatterers. A point particle moves ballistically with velocity $\pm v$ between the two consecutive scatterers when it hits a scatterer, then it is either transmitted or reflected by the probability $1/2$.
Barkai et al.~\cite{BFK00} calculated bounds for the MSD for equilibrium and non-equilibrium initial conditions.
Subsequently, Burioni et al.~\cite{BCV10} adopted some simplifying assumptions to find the asymptotic form 
for non-equilibrium conditions of all moments $\langle |d(t)|^{p}\rangle$ with $p>0$\,
\begin{eqnarray}\label{eq:MOMBUR}
  \langle |d(t)|^{p} \rangle
  \sim
  \left\{
    \begin{array}{lll}
      t^{\frac{p}{1+\xi}} \,,                 & \quad\text{for  } & \xi<1,\ p<\xi    \, ,\\
      t^{\frac{p(1+\xi)-\xi^{2}}{1+\xi}} \,,     & \quad\text{for  } & \xi<1,\ p>\xi    \, ,\\
      t^{\frac{p}{2}} \,,                    & \quad\text{for  } & \xi>1,\ p<2\xi-1 \, ,\\
      t^{\frac{1}{2}+p-\xi} \,,                & \quad\text{for  } & \xi>1,\ p>2\xi-1 \, . 
    \end{array}
  \right. 
\end{eqnarray}
For the MSD, $p=2$, this result implies
\begin{eqnarray}\label{eq:LLgMeanSquare}
  \langle d(t)^{2} \rangle 
  \sim 
  t^\eta, \;\;\;\;
  \eta 
  =
\left\{
    \begin{array}{lll}
      2 - \frac{\xi^{2}}{(1+\xi)} \,,   & \quad\text{for  } & \xi < 1    \, ,\\
      \frac{5}{2}-\xi \,,             & \quad\text{for  } & 1  \leq \xi < 3/2  \, ,\\
      1  \,,                          & \quad\text{for  } & 3/2 \leq \xi         \, .
    \end{array}
  \right. 
\end{eqnarray}%
Unlike the SM and the FND, which enjoys sub-diffusive transport for $\alpha > 1$ and $\mu > 1$ respectively, non-equilibrium initial conditions for the LLG only lead to 
super-diffusive $(0<\xi< 3/2)$ or 
diffusive $(\xi \ge  3/2)$ regimes: sub-diffusion is not expected.

The moments of the SM in its super-diffusive regime ($0<\alpha < 1$) can be mapped to those of the LLG~\cite{Salari,GRTV17}. 
All moments of the SM, \Eq{SM-moments}, (and so FND, \Eq{FnD-moment-scaling}) scale like those conjectured and numerically validated for the LLG, \Eq{MOMBUR}, once the second moments do. 
This is the case if the following holds (\cf~\Eqs{SM-moments} and \eq{FnD-moment-scaling})
\begin{equation}\label{eq:llgpdf}
  \alpha = \mu
  =
  \left\{
    \begin{array}{lll}
      \frac{\xi^{2}}{(1+\xi)}\,,  & \quad\text{for  } & 0 < \xi \leq 1          \, ,\\[2mm]
      \xi - \frac{1}{2}\,,    & \quad\text{for  } & 1 < \xi \leq \frac{3}{2}\, ,\\[2mm]
      1  \,,       & \quad\text{for }  & \frac{3}{2} < \xi \, .
    \end{array}
  \right. 
\end{equation}
When adopting this mapping all other moments of the SM, the FND and the LLG agree with those of the LLG, \Eq{MOMBUR}. 
This means that \Eq{llgpdf} make these processes asymptotically indistinguishable from the point of view of all position moments and $2$-point position auto-correlation function \cite{GRTV17,VRTGM21}. We thus now extend these equivalence to the $3$-point position auto-correlation function and check whether the higher correlations differ or they follow the same equivalence agreement. The single dimensionless time ratio $h_2/t_2$ expression for the $3$-point correlations are calculated analytically for the SM and the FND, (\cf~\Eqs{U_3pt_corr} and \eq{U_3pt_corr_FnD}). This will then be compared to numerically estimated data for the LLG. For the position auto-correlation function in the LLG, there are no analytic results of any order, such as those of Burioni et al.~\cite{BCV10} for the moments. We numerically estimate the $3$-point displacement correlation in~\ref{subsec:LLg_correlations}.
\begin{oss}
For $t\rightarrow \infty$, the asymptotic behavior of the third moment of displacement of the LLG can be obtained by requesting $p=3$ in \Eq{MOMBUR}, one finds
\begin{align}
\langle d(t)^3 \rangle \sim t^{\overline{\eta}}, \qquad \overline{\eta} = \eta + 1\,.
\end{align}
\end{oss}
\subsection{Generalized position auto-correlation function}
We define the generalized (or $n$-point) position auto-correlation function of the LLG as following
\begin{equation}\label{eq:npCORR-LLg}
\varphi_\xi (t_1,\,t_2,\,\cdots,\, t_n) = \mathbb{E}\,[ d(t_1)\,d(t_2)\,\cdots \,d(t_n)]\,,
\end{equation}
where $\mathbb{E}$ denotes the averages, first average over the particles and then on the given random scatterers realization. We intend to compare the asymptotic form of the position auto-correlation function with the SM and the FND.
\begin{lem}
For $\xi>0$, and all $t's$ are tends to infinity, the $n$-point position auto-correlation function $\varphi_\xi$, \Eq{npCORR-LLg}, of the LLG has the following asymptotic form
\begin{equation}\label{eq:np-corr-momnt}
\varphi_\xi(t_1,t_2,\cdots,t_n) \sim 
c(h_1,h_2\cdots\,  h_{n-1})
\;t_1^{\omega_p}\,,
\end{equation}
for $p=2,\cdots,n$, where $h's$ represents the time difference, $c(h_1,h_2\cdots\,  h_{n-1})
$ denotes pre-factor and $\omega_p$ is the power law exponent for the respective order of the correlation function, this will be obtained by best fit to the data.
\end{lem}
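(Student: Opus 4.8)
The plan is to establish the scaling form \eq{np-corr-momnt} not by a closed-form computation---which is unavailable for the LLG, since the ensemble average in \eq{npCORR-LLg} involves both an average over particles and a quenched average over the random scatterer configuration drawn from \eq{probdnsty}---but by combining a scaling ansatz, the exact results already obtained for the SM and the FND, and a numerical determination of the exponent $\omega_p$. The starting point is that in all three models the large-time statistics are dominated by ballistic trajectories that have not yet localized (or ``died'') up to the largest time $t_n$. For the SM and the FND this domination was made quantitative: the $m$-point correlation splits into contributions indexed by the flight time and, when the lags $h_{k-1}$ are fixed or grow sub-linearly ($\sim t_1^{q}$, $q<1$), collapses onto the single power law $t_1^{m-\alpha}$ with a prefactor depending only on the ratios of times (\cf~Lemma~\ref{lem:nptscaling-SM}, Lemma~\ref{lemma:nptscaling-FnD} and \Eqs{U_3pt_corr} and \eq{U_3pt_corr_FnD}).

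First I would posit for the LLG the same homogeneous structure: writing $t_k = t_1 + \sum_{j<k} h_j$ and holding the lags fixed or sub-linear, the correlation should factorize asymptotically into a prefactor $c(h_1,\dots,h_{n-1})$ carrying the dependence on the time separations, times a pure power $t_1^{\omega_p}$. The heuristic justification is that, as $t_1 \to \infty$ with the $h_k$ negligible relative to the mean time, the $n$ times become asymptotically indistinguishable, so the correlation must approach the behaviour of a single-time moment; this is precisely the mechanism behind the Remarks following Lemmas~\ref{lem:nptscaling-SM} and \ref{lemma:nptscaling-FnD}, where the correlations inherit the scaling of the $m$-th displacement moment.

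Second I would fix the candidate exponent by invoking the moment equivalence. The result of Burioni et al.~\cite{BCV10} in \eq{MOMBUR}, together with the parameter matching \eq{llgpdf} that already renders the SM, the FND and the LLG indistinguishable at the level of \emph{all} moments and of the $2$-point correlation, forces the $n$-point LLG correlation, in the coincident-time limit, to scale with the same exponent as the $n$-th moment of \eq{MOMBUR}. Hence the natural identification is $\omega_p$ equal to the exponent of $\langle |d(t)|^{p} \rangle$ for the order $p$ selected by the particular limiting regime of the times (all times close, or one pair split off, and so on), exactly as the $3$-point SM and FND forms \eq{SM-3point-scaling} and \eq{FnD-3point-scaling} interpolate between the second- and third-moment exponents; this is what makes the index $p$ range over $2,\cdots,n$.

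The hard part is that none of this can be turned into a rigorous derivation of $\omega_p$ for the LLG: the quenched disorder average destroys the clean interval splitting that made the SM and FND tractable, and no analytic expression for any order of the LLG position correlation is known. The argument is therefore completed operationally, treating \eq{np-corr-momnt} as an ansatz whose validity is tested by the predicted data collapse. Concretely, I would estimate $\varphi_\xi$ numerically over many combinations of $t_1,t_2,t_3$ (varying $t_1$ with fixed lags, with sub-linear lags $t_1^{q_i}$, and with one time sent to infinity), normalize by the factor $t_1 t_2^{2-\alpha}$ dictated by the single-parameter forms, and plot against $h_2/t_2$; agreement of all these curves with the SM and FND prediction of \Eqs{U_3pt_corr} and \eq{U_3pt_corr_FnD}, as displayed in Fig.~\ref{fig:3-point-data}, simultaneously confirms the form \eq{np-corr-momnt} and fixes $\omega_p$ by best fit. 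The main obstacle is thus conceptual rather than computational: justifying that the disorder-averaged LLG shares the ballistic-flight-dominated scaling of the deterministic surrogates, for which the data collapse is the decisive evidence.
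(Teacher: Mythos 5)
Your proposal takes essentially the same route as the paper: the paper offers no analytic derivation of this lemma either, but treats \Eq{np-corr-momnt} as a scaling ansatz motivated by the ballistic-flight dominance and the moment/correlation equivalence already established for the SM and FND under the mapping \Eq{llgpdf}, with $\omega_p$ fixed by best fit to the numerically estimated correlations and validated by the data collapse of \Fig{3-point-data} and the fitted exponents of \Fig{expnt-data} (where $\alpha(\xi)=2-\omega_2\simeq 3-\omega_3$). Your identification of the quenched disorder average as the obstruction to a rigorous interval-splitting argument, and of the data collapse as the decisive evidence, matches the paper's own treatment.
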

We intend to compare the asymptotic form of the $3$-point position auto-correlation function with the SM and the FND. In what follows, we define the $3$-point position correlation function of the LLG. 

\subsubsection{$3$-point position auto-correlation}\label{subsec:LLg_correlations}
We define the $3$-point position auto-correlation function of the LLG by requesting $n=3$ in \Eq{npCORR-LLg} as following
\begin{equation}\label{eq:3pCORR-LLg}
\varphi_\xi (t_1,\,t_2,\,t_3) = \mathbb{E}\,[ d(t_1)\,d(t_2)\,d(t_3)]\,,
\end{equation}
where $\mathbb{E}$ denotes the averages, first average over the particles and then on the given random scatterers realization. 
Since our aim is to check the asymptotic equivalence of the position auto-correlation function with the SM and the FND. Thus, we follow the same time composition as adopted in Sec.~\ref{subsec:3pcorr} 
\begin{enumerate}  
\item  $h_1 = t_2 - t_1$, \; as $h_1>0$ either finite or $h_1\sim t_1^{q_1},\;q_1<1$, and $t_1\rightarrow \infty$\,.
\item  $h_2 = t_3 - t_2$, \; as $h_2>0$ either finite or $h_2 \sim t_1^{q_2},\;q_2<1$, where $t_2 = t_1+h_1$, and $t_1 \rightarrow \infty$\,.
\item $t_1\geq t_2$ are fixed and set $t_3\rightarrow\infty$\,.
\end{enumerate}
Asymptotic scaling form for the moments and the $2$-point position correlation of the SM and the LLG have been tested~\cite{Salari,GRTV17}, when $\alpha$ and $\xi$ obey \Eq{llgpdf}.
In the following, we now verify the theoretical prediction of the $3$-point position auto-correlation as a function of $h_2/t_2$ of the SM \Eq{U_3pt_corr} and FND \Eq{U_3pt_corr_FnD}, with numerically estimated correlations of the LLG, \Eq{3pCORR-LLg}, that $\alpha,\,\mu$ and $\xi$ 
obey the same relation \Eq{llgpdf}. The importance of single qualitative scaling predicts the data collapse of the LLG for small and large $h_2/t_2$. 
\begin{oss}
The asymptotic behaviour of the $1$-time velocity auto-correlation function of the LLG scales like $\langle v(0)\:v(t)\rangle \sim t^{-3/2}$, as obtained by Barkai et al.~\cite{BF99}, hence it can be used to distinguish the LLG from the SM and the FND dynamics.
\end{oss}

\subsection{Scaling test of the $3$-point position auto-correlation function of the SM, FND, and LLG\label{subsec:sclngtest}}
In this section, we explore the equivalence of the $3$-point position auto-correlation of the SM, the FND, and the LLG. We try here to extend this equivalence to the $3$-point correlations.
Since the $2$-point position correlation provides the faithful description of these systems~\cite{GRTV17}. We start by recalling the scaling of $3$-point position correlation represented in \Eqs{U_3pt_corr} or \eq{U_3pt_corr_FnD} and see how far it captures the correlation of the LLG.
We adopt different settings of time composition, in these settings, all data of different cases of correlation function sit on the same curve, see \Fig{3-point-data}.
\subsubsection{Data analysis}
We have obtained a sufficient amount of numerically estimated data concerning the correlation function, exploring various relationships among the three time variables. 
In all these functional relationships between the times defined in section \ref{subsec:LLg_correlations}, we find that the position auto-correlation function of the SM and the FND followed the dependence on \Eqs{U_3pt_corr} or \eq{U_3pt_corr_FnD}, 
and adopted the mapping of parameters $\alpha,\,\mu$ and $\xi$ (\cf~\Eq{llgpdf}).
This is demonstrated in \FIG{3-point-data}, for the data collapse for one parameter dependent $3$-point position auto-correlation function with different functional relationships between three times. For 
$h_2<t_2$, we observe an excellent match between the LLG data and quantitative prediction of the SM, \Eq{U_3pt_corr} and the FND, \Eq{U_3pt_corr_FnD} at least for small values of $\xi$. 
For $h_2>t_2$, there is a different scaling and the agreement becomes gradually worse as $\xi$ increases. The three times $t_1$, $t_2$, and $t_3$ are far separated for the asymptotic scaling of small $h_2/t_2$. 

\begin{figure}[h]
  \centering
    \includegraphics[width=9.2cm, height=6.2cm]{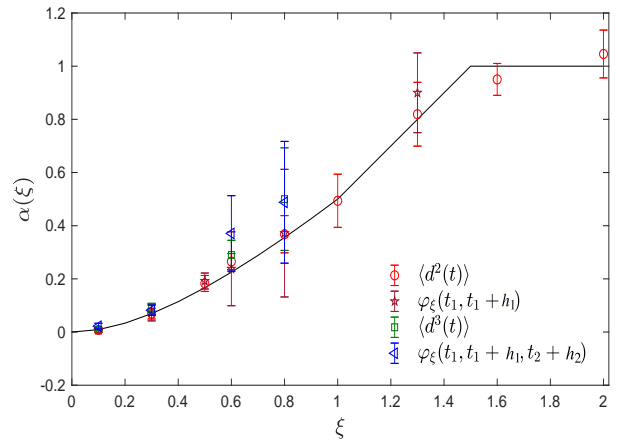}
     \caption{\label{fig:expnt-data}(Color online) This figure represents the parameters $(\xi;\alpha,\mu)$ functional relationship, \Eq{llgpdf} of the SM (or so FND) and the LLG, along with fitted values for some $\xi$. The fitted values with their bounds as a function of $\xi$ for the position moments and the correlations of order $n=2$ and $n=3$
are obtained by the best fit to the data and adopting
($\alpha(\xi)=n-best fit$).}
    \end{figure}

In \FIG{expnt-data}, we represent the theoretical mapping of the SM (or so FND) and the LLG by their $(\xi;\alpha,\mu)$ relation shown in \Eq{llgpdf}. We also represent the fitted values for some $\xi$ along the curve of the position moments and correlations. For the position moments data, which are obtained by numerically estimating the power law exponents $\eta$ and $\overline{\eta}$ of the second moment $\langle d^2(t)\rangle$ and third moment $\langle d^3(t)\rangle$, respectively, and adopting $\mu(\xi) = \alpha(\xi) = 2 - \eta(\xi) \simeq 3 - \overline{\eta}(\xi)$.
Likewise, the fitted values of the position correlations $\varphi_\xi$ are obtained by estimated the power law exponent of \EQ{np-corr-momnt}, $\omega_p$, with $p=2,3$, where the exponents $\omega_2$ and $\omega_3$ render the $2$- and $3$-point position correlation functions, respectively, and adopting $\mu(\xi) = \alpha(\xi) = 2 - \omega_2 \simeq 3 - \omega_3$ (\cf~\Eqs{sam-sclng-momnt-corr-SM} and \eq{sam-sclng-momnt-corr-FnD}). 
The power law scaling of the position auto-correlation function does the same as the asymptotic scaling of the position moments amounts with single time~$t$ (see remark~\ref{rem:mom-corr-SM}). These findings confirm the prediction of exponents provided by \Eq{MOMBUR}. Hence they can be used for the higher order position moments and the correlation functions.

\section{Discussion and Conclusion\label{sec:concl}}
The investigation on the equivalence of observables between the SM and the LLG was started by Salari et al.~\cite{Salari}, who observed that the moments scale in the same fashion. Since moments contain only partial information on the transport systems, knowledge of the correlation is an essential ingredient to highly characterize anomalous transport dynamics~\cite{Sokolov}. Therefore, Giberti et al.~\cite{GRTV17} derived the $2$-point position auto-correlation function of the SM in several scaling forms and compared it with the numerically estimated position auto-correlation function of the LLG. They found a remarkable agreement in scaling, at least for lower scatterer density (\ie for small values of $\xi$). Findings on the coincidence of the position moments and the $2$-point position auto-correlation function do not hallmark the indistinguishability of these transport systems. Other observables are needed to distinguish these processes; for instance, the velocity auto-correlation functions of these processes are quite different.

In this paper, the general order position auto-correlation functions $\phi_\alpha$ of the SM \Eq{N-point-SM-summ} and $\rho_\mu$ of the FND \Eq{M-point-FnD-summ} are analytically computed. For a special case, their $3$-point position auto-correlation functions are also presented. Based on these analytical expressions, a single scaling relation for the $3$-point position auto-correlation function is established, allowing the representation of the correlation function in a scaling form where it only depends on the ratio of times $h_2/t_2$ (\cf~\Eqs{U_3pt_corr} and \eq{U_3pt_corr_FnD}). The excellent agreement between the numerical data of the LLG and the predictions obtained by the SM and the FND (symbols and dashed lines in \FIG{3-point-data}) establishes a new way to analyze correlations in anomalous transport. Additionally, it is argued that the position moments and correlations are posed in the same way, provided that the $(\xi;\alpha,\mu)$ relation follows \Eq{llgpdf}, as represented in Figure~\ref{fig:expnt-data}. It only depends on the exponent $\eta$ characterizing the MSD and the pre-factor of that asymptotic power law.

To conclude, at the very least, for small $\xi$, the $3$-point position auto-correlation function of the SM and the FND can capture the main features of the correlation function for the non-trivial anomalous transport process. We argued that systems with different microscopic dynamics but enjoy the same transport properties, such as position moments and correlation functions up to order 3, are considered. Consequently, for super-diffusive transport, the position moments and auto-correlation functions of the SM, the FND, and the LLG are dominated by ballistic trajectories.
The behavior of rare events at large distances in the SM, the FND, and the LLG is determined by the same physical origin: a single ballistic jump. This is described in the general framework of a single big jump in \cite{VBB19, VBB20}. This ballistic jump determines all the dynamical correlations when the diffusive parameters of these systems are small enough. Conversely, at short distances, the typical dynamical evolution differs significantly.
The big jump principle holds not only for power-law distributions with small exponents but also in models characterized by subexponential distributions, as noted in \cite{BV20}. However, when parameters such as \(\alpha\), \(\mu\), or \(\xi\) change, the big jump observed in different observables. For instance, the big jump determines the correlation functions specifically for a power-law probability density function with a sufficiently small exponent. 
In the SM and the FND, the typical asymptotic dynamics are frozen; the trajectories become periodic within their neighboring cells in the SM and die in the FND. In contrast, for the LLG, the typical asymptotic dynamics can be diffusive or subdiffusive, depending on the value of $\xi$ \cite{VBB20}.
It is conjectured that the position moments and the auto-correlation function apply to a wide class of such systems \cite{VRTGM21}. Even with entirely different microscopic dynamics, the models agree regarding the characteristics of the displacement. However, the moments and correlations of the velocities may differ. In summary, these concepts have potential applications in various fields, ranging from dynamical systems and ecology to statistical physics, providing valuable insights into the behavior of complex systems.
\begin{acknowledgments}
M.T. gratefully acknowledges computational
resources provided by HPC@POLITO, the project
for Academic Computing of the Department of Control and Computer Engineering at the Politecnico di Torino, Italy \url{(http://hpc.polito.it)}.  M.T. also thanks to L. Rondoni for the useful discussions.
\end{acknowledgments}

\appendix

\section{Supporting derivation for the position moments}
\subsection{Derivation for $p^{th}$ position moments of the SM\label{app:pth-momnt}}
In this appendix we show the derivation of $p^{th}$ position moments, which is represented in Lemma~\ref{lem:pmoments}. The position moments as function of the number $n$ of iterations of the map $S_\alpha$. For $n\gg 2^{1/\alpha}$ one obtains
\begin{align}\label{eq:App-SM-p-intgrl}
\left\langle (x_n - x_0)^p \right\rangle  
 \simeq &\, 2 \,\int\limits_0^{\ell_n} \diff x \, n^p       + 2 \,\int\limits_{\ell_n}^{1/2} \diff x \, \left( x^{-1/\alpha} - 2^{1/\alpha} \right)^p\,, \\
  \simeq &
  \,2\, n^p  \:  \ell_n 
    + \frac{2}{1 - p/\alpha} \: ( 2^{-1+p/\alpha} - \ell_n^{1-p/\alpha} ) 
    + \mathcal{O}(1)\,,
    \nonumber\\
   \sim &
    \frac{2 \,p}{p - \alpha} \: n^{p - \alpha} + \mathcal{O}(1)\,,
  \nonumber\\[2mm]
   \sim & \left\{ \begin{array}{ll}
                   const.\,,       & 
                   \text{ for }  p < \alpha \, ,
                   \\[2mm]
                   \displaystyle \frac{2\, p}{p-\alpha} \, n^{p-\alpha} \,,  & \text{ for }  p > \alpha > 0 \,, \label{eq:App-SM-p-momnt}
    \end{array}\right.
\end{align}
while $p=\alpha$, \Eq{App-SM-p-intgrl} leads to
\begin{equation}\label{eq:App-Sm-a-momnt}
\left\langle (x_n - x_0)^\alpha \right\rangle
  \sim                  \displaystyle 2 \; \ln\frac{n^\alpha}{2}  \, .
\end{equation}
Collecting terms from \Eqs{App-SM-p-momnt} and \eq{App-Sm-a-momnt}, completes the proof of Lemma~\ref{lem:pmoments}.


\subsection{Derivation for $p^{th}$ position moments of the FND \label{app:pth-momnt-FnD}}
This appendix shows the derivation of $p^{th}$ moments of the FND, the asysmptotic scaling is represented in Lemma~\ref{lemma:FnD-momnts}. For $p=\mu$, the $p^{th}$ position moments can be obtained
\begin{align}
\langle |\Delta x(t)|^p \rangle =& \langle |x(x_0,t) - x_0|^p \rangle \,, \nonumber \\
=& \int\limits_{0}^{1} |x(x_0,t) - x_0|^p dx_0 \,, \nonumber\\
=& \int\limits_{0}^{P(>t)} t^\mu \,dx_0
+  \int\limits_{P(>t)}^{1} \left(t_c(x_0)\right)^\mu dx_0\,, \label{eq:app_momnts-FnD}
\end{align}
where $t_c(x_0)$ is the final position of the particle and has power law tail, expressed in \Eq{FnD_tail}. The probability $P(>t)$ to perform a flight longer than $t$ amounts to the fraction of initial condition $x_0$ with $t_c(x_0)>t$ (\cf~\EQ{FnD_Probability}), such that from \Eqs{FnD_tail} and \eq{FnD_Probability}, we can write \Eq{app_momnts-FnD} as following 
\begin{align}
\langle |\Delta x(t)|^p \rangle &= \int\limits_{0}^{l/t^\mu} t^\mu \,dx_0
+  \int\limits_{l/t^\mu}^{1} \left(t_c(x_0)\right)^\mu dx_0\,, \\
&= t^p\frac{l}{t^\mu} + \frac{l^{p/\mu}}{1-p/\mu} 
\left(1 - \left(\frac{l}{t^\mu} \right)^{1-p/\mu} \right)\nonumber.
\end{align}
Rearrange and collect terms for the $t^{p-\mu}$ and $l^{p/\mu}$, one find
\begin{align*}
\langle |\Delta x(t)|^p \rangle = \frac{pl}{p-\mu}t^{p-\mu} + \frac{\mu}{\mu - p} l^{p/\mu}\,.
\end{align*}
Analogous derivation for $p=\mu$, and in the limit of long times $t>l^{1/\mu}$ completes the proof of Lemma~\ref{lemma:FnD-momnts}.

\bibliographystyle{unsrt} 
\bibliography{./anomalous}

\begin{thebibliography}{10}

\bibitem{SM75}
H.~Scher and E.~W. Montroll.
\newblock Anomalous transit-time dispersion in amorphous solids.
\newblock {\em Phys. Rev. B}, 12:2455--2477, (1975).

\bibitem{JBR08}
O.~G. Jepps, C.~Bianca, and L.~Rondoni.
\newblock Onset of diffusive behavior in confined transport systems.
\newblock {\em Chaos}.

\bibitem{JeRo06}
O.~G. Jepps and L.~Rondoni.
\newblock Thermodynamics and complexity of simple transport phenomena.
\newblock {\em J. Phys. A: Math. Gen.}

\bibitem{LZRVCMSE17}
E.~K. Lenzi, R.~S. Zola, H.~V. Ribeiro, D.~S. Vieira, F.~Ciuchi, A.~Mazzulla,
  N.~Scaramuzza, and L.~R. Evangelista.
\newblock Ion motion in electrolytic cells: Anomalous diffusion evidences.
\newblock {\em J. Phys. Chem. B}, 121:2882, (2017).

\bibitem{BGM12}
E.~Barkai, Y.~Garini, and R.~Metzler.
\newblock Strange kinetics of single molecules in living cells.
\newblock {\em Phys. Today}, 65:29, (2012).

\bibitem{SBAD12}
Y.~Sagi, M.~Brook, I.~Almog, , and N.~Davidson.
\newblock Observation of anomalous diffusion and fractional self-similarity in
  one dimension.
\newblock {\em Phys. Rev. Lett.}, 108:093002, (2012).

\bibitem{HA02}
S.~Havlin and D.~Ben-Avraham.
\newblock Diffusion in disordered media.
\newblock {\em Advances in Physics}, 51(1):187, (2002).

\bibitem{JLOM13}
J.-H. Jeon, N.~Leijnse, L.~B. Oddershede, and R.~Metzler.
\newblock Anomalous diffusion and power-law relaxation of the time averaged
  mean squared displacement in worm-like micellar solutions.
\newblock {\em New Journal of Physics}, 15(4):045011, (2013).

\bibitem{JMJM12}
J.-H. Jeon, H.~M.-S. Monne, M.~Javanainen, and R.~Metzler.
\newblock Anomalous diffusion of phospholipids and cholesterols in a lipid
  bilayer and its origins.
\newblock {\em Phys. Rev. Lett.}, 109:188103, (2012).

\bibitem{SW09}
J.~Szymanski and M.~Weiss.
\newblock Elucidating the origin of anomalous diffusion in crowded fluids.
\newblock {\em Phys. Rev. Lett.}, 103:038102, (2009).

\bibitem{GW10}
N.~Gal and D.~Weihs.
\newblock Experimental evidence of strong anomalous diffusion in living cells.
\newblock {\em Phys. Rev. E}, 81(1):020903(R), (2010).

\bibitem{FSBZ15}
D.~Froemberg, M.~Schmiedeberg, E.~Barkai, and V.~Zaburdaev.
\newblock A fractional dynamics approach.
\newblock {\em Phys. Rev. E}.

\bibitem{FVGB17}
R.~M. Feliczaki, E.~Vicentini, and P.~P. Gonz\'{a}lez-Borrero.
\newblock Dynamical transition on the periodic lorentz gas: Stochastic and
  deterministic approaches.
\newblock {\em Phys. Rev. E}.

\bibitem{KRS08}
R.~Klages, G.~Radons, and I.~M. Sokolov, editors.
\newblock {\em Anomalous Transport: Foundations and Applications}.
\newblock Wiley-VCH.

\bibitem{WCMS22}
W.~Wang, A.~G. Cherstvy, R.~Metzler, and I.~M. Sokolov.
\newblock Restoring ergodicity of stochastically reset anomalous-diffusion
  processes.
\newblock {\em Phys. Rev. Res.}, 4:013161, (2022).

\bibitem{MK00}
R.~Metzler and J.~Klafter.
\newblock The random walk's guide to anomalous diffusion: a fractional dynamics
  approach.
\newblock {\em Physics Reports}, 339(1):1--77, (2000).

\bibitem{MK04}
R.~Metzler and J.~Klafter.
\newblock The restaurant at the end of the random walk: recent developments in
  the description of anomalous transport by fractional dynamics.
\newblock {\em Journal of Physics A: Mathematical and General}, 37(31):R161,
  (2004).

\bibitem{Kla06}
R.~Klages.
\newblock {\em Microscopic chaos, fractals and transport in nonequilibrium
  statistical mechanics}.
\newblock World Scientific.

\bibitem{MJCB14}
R.~Metzler, J.-H.~Jeon ~, A.~G. Cherstvy, and E.~Barkai.
\newblock Anomalous diffusion models and their properties: non-stationarity{,}
  non-ergodicity{,} and ageing at the centenary of single particle tracking.
\newblock {\em Phys. Chem. Chem. Phys.}, 16:24128--24164, (2014).

\bibitem{Gas05}
P.~Gaspard.
\newblock {\em Chaos, scattering and statistical mechanics}.
\newblock Cambridge University Press.

\bibitem{D99}
J.~R. Dorfman.
\newblock {\em An introduction to chaos in non-equilibrium statistical
  mechanics}.
\newblock Cambridge University Press, Cambridge, (1999).

\bibitem{Kla96}
R.~Klages.
\newblock {\em Deterministic diffusion in one-dimensional chaotic dynamical
  systems}.
\newblock Wissenschaft \& Technik-Verlag, Berlin, (1996).

\bibitem{JBS03}
O.~G. Jepps, S.~K. Bathia, and D.J. Searles.
\newblock Wall mediated transport in confined spaces: exact theory for low
  density.
\newblock {\em Phys.\ Rev.\ Lett.}

\bibitem{Salari}
L.~Salari, L.~Rondoni, C.~Giberti, and R.~Klages.
\newblock A simple non-chaotic map generating subdiffusive, diffusive, and
  superdiffusive dynamics.
\newblock {\em Chaos}.

\bibitem{Zas02}
G.~M. Zaslavsky.
\newblock Chaos, fractional kinetics, and anomalous transport.
\newblock {\em Phys. Rep.}

\bibitem{DC00}
C.~P. Dettmann and E.~G.~D. Cohen.
\newblock Microscopic chaos and diffusion.
\newblock {\em J. Stat. Phys.}

\bibitem{CFVN02}
F.~Cecconi, {D.~del} Castillo-Negrete, M.~Falcioni, and A.~Vulpiani.
\newblock The origin of diffusion: The case of non chaotic systems.
\newblock {\em Physica D}.

\bibitem{GNZ85}
T.~Geisel, J.~Nierwetberg, and A.~Zacherl.
\newblock Accelerated diffusion in josephson junctions and related chaotic
  systems.
\newblock {\em Phys. Rev. Lett.}, 54:616--619, (1985).

\bibitem{BF99}
E.~Barkai and V.~Fleurov.
\newblock Stochastic one-dimensional {L}orentz gas on a lattice.
\newblock {\em J. Stat. Phys.}, 96:325, (1999).

\bibitem{Sokolov}
I.~M. Sokolov.
\newblock Models of anomalous diffusion in crowded environments.
\newblock {\em Soft Matter}, 8:9043, (2012).

\bibitem{Z07}
G.~Zaslavsky.
\newblock {\em The physics of Chaos in Hamiltonian systems, second edition}.
\newblock (2007).

\bibitem{DKU03}
S.~Denisov, J.~Klafter, and M.~Urbakh.
\newblock Dynamical heat channels.
\newblock {\em Phys. Rev. Lett.}

\bibitem{LWWZ05}
B.~Li, J.~Wang, L.~Wang, and G.~Zhang.
\newblock Anomalous heat conduction and anomalous diffusion in nonlinear
  lattices, single walled nanotubes, and billiard gas channels.
\newblock {\em Chaos}.

\bibitem{BFK00}
E.~Barkai, V.~Fleurov, and J.~Klafter.
\newblock One-dimensional stochastic {L}{\'e}vy-{L}orentz gas.
\newblock {\em Phys. Rev. E}, 61:1164, (2000).

\bibitem{BCV10}
R.~Burioni, L.~Caniparoli, and A.~Vezzani.
\newblock {L{\'e}vy} walks and scaling in quenched disordered media.
\newblock {\em Phys. Rev. E}.

\bibitem{LWMC23}
Y.~Liang, W.~Wang, R.~Metzler, and A.~G. Cherstvy.
\newblock Anomalous diffusion, nonergodicity, non-gaussianity, and aging of
  fractional brownian motion with nonlinear clocks.
\newblock {\em Phys. Rev. E}, 108:034113, (2023).

\bibitem{WCLM20}
W.~Wang, A.~G. Cherstvy, X.~Liu, and R.~Metzler.
\newblock Anomalous diffusion and nonergodicity for heterogeneous diffusion
  processes with fractional gaussian noise.
\newblock {\em Phys. Rev. E}, 102:012146, (2020).

\bibitem{Z08}
V.~Y. Zaburdaev.
\newblock Microscopic approach to random walks.
\newblock {\em J. Stat. Phys.}, 133(1):159, (2008).

\bibitem{BF07}
A.~Baule and R.~Friedrich.
\newblock A fractional diffusion equation for two-point probability
  distributions of a continuous-time random walk.
\newblock {\em Europhysics Letters ({EPL})}, 77(1):10002, (2007).

\bibitem{BS07}
E.~Barkai and I.~M. Sokolov.
\newblock Multi-point distribution function for the continuous time random
  walk.
\newblock {\em J. Stat. Mech. Theory Exp}, 2007(08):P08001, (2007).

\bibitem{ZDK15}
V.~Zaburdaev, S.~Denisov, and J.~Klafter.
\newblock {L}\'evy walks.
\newblock {\em Rev. Mod. Phys.}, 87:483, (2015).

\bibitem{VRTGM21}
J.~Vollmer, L.~Rondoni, M.~Tayyab, C.~Giberti, and C.~M.-Monasterio.
\newblock Displacement autocorrelation functions for strong anomalous
  diffusion: A scaling form, universal behavior, and corrections to scaling.
\newblock {\em Phys. Rev. Res.}, 3:013067, (2021).

\bibitem{CMGV99}
P.~Castiglione, A.~Mazzino, P.~Muratore-Ginanneschi, and A.~Vulpiani.
\newblock On strong anomalous diffusion.
\newblock {\em Physica D: Nonlinear Phenomena}, 134(1):75--93, (1999).

\bibitem{VBB19}
A.~Vezzani, E.~Barkai, and R.~Burioni.
\newblock Single-big-jump principle in physical modeling.
\newblock {\em Phys. Rev. E}, 100:012108, (2019).

\bibitem{VBB20}
A.~Vezzani, E.~Barkai, and R.~Burioni.
\newblock Rare events in generalized {L}évy {W}alks and the {B}ig {J}ump
  principle.
\newblock {\em Sci Rep}, 10:2372, (2020).

\bibitem{BCLL16}
A.~Bianchi, G.~Cristadoro, M.~Lenci, and M.\ Ligab\`o.
\newblock Random walks in a one-dimensional {L\'evy} random environment.
\newblock {\em J Stat Phys}, 163:22 -- 40, (2016).

\bibitem{Z23}
M.~Zamparo.
\newblock {Large fluctuations and transport properties of the L\'evy–Lorentz
  gas}.
\newblock {\em Annales de l'Institut Henri Poincaré, Probabilités et
  Statistiques}, 59(2):621 -- 661, (2023).

\bibitem{GRTV17}
C.~Giberti, L.~Rondoni, M.~Tayyab, and J.~Vollmer.
\newblock Equivalence of position–position auto-correlations in the slicer
  map and the {L}\'evy–{L}orentz gas.
\newblock {\em Nonlinearity}, 32(6):2302, (2019).

\bibitem{GSSPCM18}
D.~M.-Garcia, T.~Sandev, H.~Safdari, G.~Pagnini, A.~Chechkin, and R.~Metzler.
\newblock Crossover from anomalous to normal diffusion: truncated power-law
  noise correlations and applications to dynamics in lipid bilayers.
\newblock {\em New Journal of Physics}, 20(10):103027, (2018).

\bibitem{BV20}
R.~Burioni and A.~Vezzani.
\newblock Rare events in stochastic processes with sub-exponential
  distributions and the big jump principle.
\newblock {\em Journal of Statistical Mechanics: Theory and Experiment},
  2020(3):034005, (2020).

\end{thebibliography}

\end{document}